\definecolor{mygrey}{rgb}{0.4,0.4,0.4}
\definecolor{myblue}{rgb}{0.2,0.2,0.7}
\definecolor{mygreen}{rgb}{0.2,0.7,0.2}
\definecolor{myred}{rgb}{0.8,0.2,0.2}
\DeclareFontFamily{U}  {MnSymbolC}{}
\DeclareFontShape{U}{MnSymbolC}{m}{n}{
    <-6>  MnSymbolC5
   <6-7>  MnSymbolC6
   <7-8>  MnSymbolC7
   <8-9>  MnSymbolC8
   <9-10> MnSymbolC9
  <10-12> MnSymbolC10
  <12->   MnSymbolC12}{}
\DeclareSymbolFont{MnSyC}{U}{MnSymbolC}{m}{n}
\DeclareMathSymbol{\righthalfcup}{\mathrel}{MnSyC}{184}
\theoremstyle{definition}
\newcommand{\romanqed}{$\righthalfcup$}
\declaretheoremstyle[
  spaceabove=5pt, spacebelow=5pt,
  headfont=\normalfont\bfseries,
  notefont=\mdseries, notebraces={(}{)},
  bodyfont=\normalfont,
  postheadspace=1em,
  qed=\romanqed
]{fkp}
\declaretheorem[style=fkp,numberwithin=section]{theorem}
\declaretheorem[style=fkp,sibling=theorem]{remark}
\declaretheorem[style=fkp,sibling=theorem]{example}
\declaretheorem[style=fkp,sibling=theorem]{definition}
\newcommand{\tuple}[1]{\langle #1 \rangle}
\newcommand{\true}{\textit{true}}
\newcommand{\false}{\textit{false}}
\newcommand{\hoare}[3]{\{{#1}\}\;\;#2\;\;\{{#3}\}}
\newcommand{\Loc}{\textsf{Loc}}
\newcommand{\until}{\textbf{U}}
\newcommand{\finally}{\textbf{F}}
\newcommand{\globally}{\textbf{G}}
\newcommand{\ltlnext}{\textbf{X}}
\newcounter{oldeqn}
\BODY\end{align}%
\tikzstyle{base} = [>=stealth,thick]
\tikzstyle{dfg} = [base,
\providecommand*{\revmodels}{%
  \mathrel{%
    \mathpalette\@revmodels\models
  }%
}
\newcommand*{\@revmodels}[2]{%
  \reflectbox{$\m@th#1#2$}%
}
\newif\iflong
\newcommand{\rankformula}{w}
\newcommand{\rankformulas}{\mathscr{W}}
\newcommand{\closure}[1]{\langle\!\langle#1\rangle\!\rangle}
\newcommand{\idx}[1]{\text{\rm\texttt{#1}}}
\newcommand{\ic}[2]{{\tuple{#1 : #2}}}
\newcommand{\iv}[2]{#1(#2)}
\newcommand{\src}{\textsf{src}}
\newcommand{\tgt}{\textsf{tgt}}
\newcommand{\init}{\textsf{init}}
\newcommand{\lang}{\mathcal{L}}
\newcommand{\GV}{\mathsf{GV}}
\newcommand{\LV}{\mathsf{LV}}
\newcommand{\loc}{\mathsf{loc}}
\newcommand{\start}{\mathsf{start}}
\renewcommand{\phi}{\varphi}
\newcommand{\formulae}{\mathcal{F}}
\newcommand{\config}{\mathcal{C}}
\newcommand{\ctrans}[4]{#1 \xrightarrow{#2:#3} #4}
\newcommand{\ar}{\textit{ar}}
\newcommand{\ite}[3]{\texttt{if }#1\texttt{ then }#2\texttt{ else } #3}
\newcommand{\old}[1]{\textit{old}(#1)}
\newcommand{\ftrace}{finite trace}
\newcommand{\itrace}{infinite trace}
\newcommand{\natsleq}[1]{\{1,...,#1\}}
\newcommand{\iSigma}[1]{\Sigma(#1)}
\newcommand{\iVar}[1]{\textsf{Var}(#1)}
\newcommand{\iLV}[1]{\LV(#1)}
\newcommand{\QLTL}{\text{QLTL}}
\newcommand{\UP}{\textit{UP}}
\begin{document}

\title{Proving Liveness of Parameterized Programs}
\subtitle{(extended version)}
\authorinfo{Azadeh Farzan}
           {University of Toronto}{}
           {}
\authorinfo{Zachary Kincaid}
           {Princeton University}{}
           {}
\authorinfo{Andreas Podelski}
           {University of Freiburg}{}
           {}

\maketitle

\begin{abstract}
  Correctness of multi-threaded programs typically requires that they satisfy
  liveness properties.  For example, a program may require that no
  thread is starved of a shared resource, or that all threads eventually agree
  on a single value.  This paper presents a method for proving that such
  liveness properties hold.  Two particular challenges addressed in
  this work are that (1) the correctness argument may rely on global behaviour
  of the system (e.g., the correctness argument may require that all threads
  collectively progress towards ``the good thing'' rather than one thread
  progressing while the others do not interfere), and (2) such programs are
  often designed to be executed by \emph{any number} of threads, and the
  desired liveness properties must hold regardless of the number of threads that are active
  in the program.
\end{abstract}

\section{Introduction} \label{sec:intro}

Many multi-threaded programs are designed to be executed in parallel by an arbitrary
number of threads.  A challenging and practically relevant problem is to
verify that such a program is correct no matter how many threads are running.

Let us consider the example of the \emph{ticket mutual exclusion protocol}, pictured
in Figure~\ref{fig:ticket}.  This protocol is an idealized version of the one
used to implement spin-locks in the Linux kernel. The protocol maintains two
natural-typed variables: \texttt{s} (the \emph{service number}) and
\texttt{t} (the \emph{ticket number}), which are both initially zero.  A
fixed but unbounded number of threads simultaneously execute the protocol, which operates as follows.  First, the thread acquires a ticket by
storing the current value of the ticket number into a local variable
\texttt{m} and incrementing the ticket number (atomically).  Second, the thread waits for
the service number to reach \texttt{m} (its ticket value), and then enters
its critical section.  Finally, the thread leaves its critical section by
incrementing the service number, allowing the thread with the next ticket to
enter.

Mutual exclusion, a safety property, is perhaps the first property that comes
to mind for this protocol: no two threads should be in their critical sections
at the same time.  But one of the main reasons that the ticket protocol came
to replace simpler implementations of spin-locks in the Linux kernel was
because it satisfies \emph{non-starvation} \cite{ticket} (a liveness
property): no thread that acquires a ticket waits forever to enter its
critical section (under the fairness assumption that every thread is scheduled
to execute infinitely often).

Intuitively, the argument for non-starvation in the ticket protocol is
obvious: tickets are assigned to threads in sequential order, and whenever a
thread exits its critical section, the next thread in the sequence enters.
However, it is surprisingly difficult to come up with a formal correctness
argument manually, let alone automatically.  This paper presents a theoretical
foundation for algorithmic verification of liveness properties of
multi-threaded programs with any number of threads.

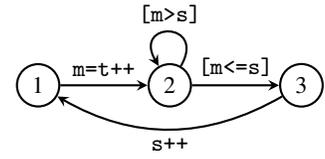
\begin{figure}
\begin{minipage}[b]{3.75cm}
  \noindent
\texttt{global nat s, t}\\
\texttt{local nat m}\\  
\textbf{while}\texttt{(true):}\\
\hspace*{0.25cm}\texttt{m=t++}\hfill{\textit{\color{gray}// Acquire a ticket}}\\
\hspace*{0.25cm}\textbf{while}\texttt{(m>s):}\hfill{\textit{\color{gray}// Busy wait}}\\
\hspace*{0.5cm}\texttt{skip}\\
\null\hfill{\textit{\color{red}// Critical section}}\\
\hspace*{0.25cm}\texttt{s++}\hfill{\textit{\color{gray}// Exit critical}}
\end{minipage}
\begin{minipage}[b]{4.5cm}
  \flushright
\begin{tikzpicture}[base,node distance=1.75cm]
  \node [circle,draw] (1) {1};
  \node [circle,draw,right of=1] (2) {2};
  \node [circle,draw,right of=2] (3) {3};
  \draw (1) edge[->] node[above]{\texttt{m=t++}} (2);
  \draw (2) edge[->] node[above]{\texttt{[m<=s]}} (3);
  \draw (3) edge[->,bend left] node[below]{\texttt{s++}} (1);
  \draw (2) edge[->,in=120,out=60,looseness=6] node[above]{\texttt{[m>s]}} (2);
\end{tikzpicture}
\end{minipage}
\caption{Ticket mutual exclusion protocol \label{fig:ticket}}
\end{figure}


The core of our method is the notion of \emph{well-founded proof spaces}.
Well-founded proof spaces are a formalism for proving properties of infinite
traces.  An \emph{infinite trace} is an infinite sequence of program commands
paired with thread identifiers, where a pair $\ic{\sigma}{\idx{i}}$ indicates
that the command $\sigma$ is executed by thread $\idx{i}$.  We associate with
each well-founded proof space a set of infinite traces that the space proves
to be terminating.  A well-founded proof space constitutes a proof of program
termination if every trace of the program is proved terminating.  A well-founded proof space
constitutes a proof of a liveness property if every trace of the program that
does \emph{not} satisfy the liveness property is proved terminating.


The main technical contribution of the paper is an approach to verifying that
a well-founded proof space proves that all program traces terminate.  Checking this condition is a language inclusion
problem, which is complicated by the fact that the languages consist of words
of infinite length, and are defined over an infinite alphabet (since each
command must be tagged with an identifier for the thread that executed it).
This inclusion problem
is addressed in two steps: first, we show how the inclusion between two sets
of infinite traces of a particular form can be proven by proving inclusion
between two sets of \emph{finite} traces
(Theorems~\ref{thm:up} and~\ref{thm:inclusion-soundness}).  This is essentially a reduction of
infinite trace inclusion to verification of a safety property; the reduction
solves the \emph{infinite length} aspect of the inclusion problem.  Second, we
develop \emph{quantified predicate automata}, a type of automaton suitable for
representing these languages that gives a concrete characterization of this
safety problem as an emptiness problem
(Theorem~\ref{thm:qpa_rec}).  In this context, quantification is used as a mechanism for
enforcing behaviour that \emph{all} threads must satisfy.  This solves the
\emph{infinite alphabet} aspect of the inclusion problem.

The overall contribution of this paper is a formal foundation for automating
liveness proofs for parameterized programs.  We investigate its theoretical
properties and pave the way for future work on exploring efficient algorithms
to implement the approach.



\subsection{Related work} \label{sec:relwork}

There exist proof systems for verifying liveness properties of parameterized
systems (for example, \cite{Sanchez2014}).  However, the problem of
automatically constructing such proofs has not been explored.  To the best of
our knowledge, this paper is the first to address the topic of automatic
verification of liveness properties of (infinite-state) programs with a
parameterized number of threads.

\textbf{Parameterized model checking} considers systems that consist of
unboundedly many finite-state processes running in parallel
\cite{Abdulla2010,journals/sttt/AbdullaJNdS12,PnueliRZ01,vmcai/FangPPZ04,tacas/FPPZ04,journals/corr/Durand-Gasselin15}.
In this paper, we develop an approach to the problem of verifying liveness
properties of parameterized \emph{programs}, in which processes are infinite
state.  This demands substantially different techniques than those used in
parameterized model checking.  The techniques used in this paper are more
closely related to \emph{termination analysis} and \emph{parameterized program
  analysis}.


\textbf{Termination analysis} an active field with many effective techniques
\cite{pldi/CookPR06,conf/tacas/CookSZ13,popl/CousotC12,conf/cav/HeizmannHP14,cav/LeeWY12,conf/tacas/Urban15}.
One of the goals of the present paper is to adapt the incremental style of
termination analysis pioneered by Cook et al. \cite{Cook2005,pldi/CookPR06} to the setting of parameterized
programs.  The essence of this idea is to construct a
termination argument iteratively via abstraction refinement: First, sample
some behaviours of the program and prove that those are terminating.  Second,
assemble a termination argument for the example behaviours into a candidate
termination argument.  Third, use a safety checker to prove that the
termination argument applies to \emph{all} behaviours of the program.  If the
safety check succeeds, the program terminates; if not, we can use the
counter-example to improve the termination argument.

Termination analyses have been developed for the setting of concurrent
programs \cite{Cook2007,Popeea2012,KetemaDonaldson2014}.  Our work differs in
two respects.  First, our technique handles the case that there are
unboundedly many threads operating simultaneously in the system.  Second, the
aforementioned techniques prove termination using \emph{thread-local}
arguments.  A thread-local termination argument expresses that each thread individually progresses towards some goal assuming that its
environment (formed by the other threads) is either passive or at least does
not disrupt its progress.  In contrast, the technique proposed in the
paper is able to reason about termination that requires coordination between all
threads (that is, all threads together progress towards some goal).  This
enables our approach to prove liveness for programs such as the Ticket
protocol (Figure~\ref{fig:ticket}): proving that some distinguished thread
will eventually enter its critical section requires showing that all
\emph{other} threads collectively make progress on increasing the value of the
service number until the distinguished thread's ticket is reached.

\textbf{Parameterized safety analysis} deals with proving safety properties of
infinite state concurrent programs with unboundedly many threads
\cite{Kaiser2014,Sanchez2012,Jaffar2009,Segalov2009}.  Safety analysis is
relevant to liveness analysis in two respects: (1)~In liveness analysis based
on abstraction refinement, checking the validity of a correctness argument is
reduced to the verification of a safety property \cite{Cook2005,pldi/CookPR06}
(2)~An invariant is generally needed in order to establish (or to
\emph{support}) a ranking function.  
Well-founded proof spaces can be seen as an extension of \emph{proof spaces}
\cite{Farzan2015}, a proof system for parameterized safety analysis, to prove
liveness properties.  A more extensive comparison between proof spaces and
other methods for parameterized safety analysis can be found in
\cite{Farzan2015}.


\section{Parameterized Program Termination} \label{sec:prelim}

This section defines parameterized programs and parameterized program
termination in a language-theoretic setting.


A \emph{parameterized program} is a multi-threaded program in which each
thread runs the same code, and where the number of threads is an input
parameter to the system.  A parameterized program can be specified by a
control flow graph that defines the code that each thread
executes.
A control flow graph is a directed, labeled graph \[P
= \tuple{\Loc,\Sigma,\ell_\init,\src,\tgt}\] where $\Loc$ is a set of program
locations, $\Sigma$ is a set of program commands, $\ell_\init$ is a designated
initial location, and $\src,\tgt : \Sigma \rightarrow \Loc$ are functions
mapping each program command to its source and target location.

Let $P$ be a program as given above. 
An \emph{indexed command}
$\ic{\sigma}{\idx{i}} \in \Sigma \times \mathbb{N}$ of $P$ is a pair
consisting of a program command $\sigma$ and an identifier $\idx{i}$ for the
thread that executes the command.\footnote{In the following, we will use typewriter font $\idx{i}$ as a meta-variable that ranges over thread identifiers (so \idx{i} is just a natural number, but one that is intended to identify a thread).}  For any natural number $N$, define
$\iSigma{N}$ to be the set of indexed commands $\ic{\sigma}{\idx{i}}$ with
$\idx{i} \in \natsleq{N}$.

Let $\Sigma$ be a set of program commands and $N \in \mathbb{N}$ be a natural
number.  A \emph{trace} over $\iSigma{N}$ is a finite or infinite sequence of
indexed commands.  We use $\iSigma{N}^*$ to denote the set of all finite traces over $\iSigma{N}$ and 
$\iSigma{N}^\omega$ to denote the set of infinite traces over $\iSigma{N}$.  For a finite trace $\tau$, we use $|\tau|$ to denote the length of
$\tau$.  For a (finite or infinite) trace $\tau$, we use 
$\tau_k$ to denote the $k^{\textit{th}}$ letter of $\tau$ and
$\tau[m,n]$ to denote the sub-sequence $\tau_m\tau_{m+1}\dotsi\tau_n$.  For a
finite trace $\tau$, and a (finite or infinite) trace $\tau'$, we use $\tau
\cdot \tau'$ to denote the concatenation of $\tau$ and $\tau'$.
We use
$\tau^\omega$  for the \itrace{} obtained by the infinite repeated
concatenation of the finite trace $\tau$
($\tau\cdot\tau\cdot\tau\dotsi$). 

For a parameterized program $P$ and a number $N \in \mathbb{N}$, we use $P(N)$
to denote a B\"{u}chi automaton that accepts the traces of the $N$-threaded
instantiation of the program $P$.  Formally, we define $P(N) =
\tuple{Q,\iSigma{N},\Delta,q_0,F}$ where
\begin{itemize}
\item $Q = \natsleq{N} \rightarrow \Loc$ (states are $N$-tuples of locations)
\item $\Delta = \{ (q,\ic{\sigma}{\idx{i}},q') : q(\idx{i}) = \src(\sigma) \land q' = q[\idx{i} \mapsto \tgt(\sigma)] \}$
\item $q_0 = \lambda i. \ell_\init$ (initially, every thread is at $\ell_\init$)
\item $F = Q$ (every state is accepting)
\end{itemize}
We use $\lang(P(N))$ to denote the language recognized by $P(N)$, and define
the set of traces of $P$ to be $\lang(P) = \bigcup_{N \in \mathbb{N}}
\lang(P(N))$.  We call the traces in $\lang(P)$ \emph{program traces}.

Fix a set of global variables $\GV$ and a set of local variables $\LV$.  For
any $N \in \mathbb{N}$, we use $\iLV{N}$ to denote a set of \emph{indexed
  local variables} of the form $\iv{l}{\idx{i}}$, where $l \in \LV$, and
$\idx{i} \in \natsleq{N}$.  $\iVar{N}$ denotes the set $\GV \cup \iLV{N}$.  We
do not fix the syntax of program commands.  A \emph{program assertion}
(program term) is a formula (term) over the vocabulary of some appropriate
theory augmented with a symbol for each member of $\GV$ and $\iLV{N}$ (for
all $N$).  For example, the program term (\texttt{x}(1) + \texttt{y}(2) +
\texttt{z}) refers to the sum of Thread 1's copy of the local variable
\texttt{x}, Thread 2's copy of the local variable \texttt{y}, and the global
variable \texttt{z}, and can be evaluated in 
a program state with at least the
threads $\{1,2\}$; the program assertion (\texttt{x}(1) $>$ \texttt{x}(2)) is
satisfied by any state (with at least the threads $\{1,2\}$) where Thread 1's
value for \texttt{x} is greater than Thread 2's.

We do not explicitly formalize the semantics of parameterized programs, but
will rely on an intuitive understanding of some standard concepts.  We write
$s \models \phi$ to indicate that the program state $s$ satisfies the program
assertion $\phi$.  We write $s \xrightarrow{\ic{\sigma}{\idx{i}}} s'$ to
indicate that $s$ may transition to $s'$ when thread $\idx{i}$ executes the
command $\sigma$.  Lastly, we say that a program state $s$ is initial if the
program may begin in state $s$.

A trace
\[ \ic{\sigma_1}{\idx{i}_1}\ic{\sigma_2}{\idx{i}_2}\dotsi \]
is said to be \emph{feasible} if there exists a corresponding infinite
execution starting from some initial state $s_0$:
\[ s_0 \xrightarrow{\ic{\sigma_1}{\idx{i}_1}} s_1 \xrightarrow{\ic{\sigma_2}{\idx{i}_2}} \dotsi \ . \]
A trace for which there is \emph{no} corresponding infinite execution is said to be \emph{infeasible}.

Finally, we may give our definition of parameterized program termination as follows:
\begin{definition}[Parameterized Program Termination]\label{def:parameterizedtermination}
  We say that a parameterized program $P$ \emph{terminates} if every program trace
of $P$
is \emph{infeasible}.  
That is, for
  every $N$, every $\tau \in \lang(P(N))$ is infeasible.
\end{definition}

This definition captures the fact that a counter-example to parameterized termination involves only finitely many threads (i.e., a counter example is a trace $\tau \in \lang(P(N))$ for some $N$).  
This is due to the definition of the set of traces of a parameterized program $\lang(P)$  (which is a language over an infinite alphabet) as an infinite union of languages $\lang(P(N))$, each over a finite alphabet.

The next two sections concentrate on parameterized program termination.  We will return to general liveness properties in Section~\ref{sec:liveness}.




\section{Well-founded Proof Spaces} \label{sec:proofspace}

A well-founded proof space is a formalism for proving parameterized
termination by proving that its set of program traces are infeasible.  This
section defines well-founded proof spaces, establishes a sound proof rule for
parameterized program termination, and describes how well-founded proof spaces
can be used in an incremental algorithm for proving parameterized program
termination.

\subsection{Overview}
We motivate the formal definitions that will follow in this section by
informally describing the role of well-founded proof spaces in an incremental
strategy (\emph{\'{a} la} \cite{Cook2005,pldi/CookPR06}) for proving
termination of parameterized programs.  The pseudo-code for this
(semi-)algorithm is given in Algorithm~\ref{alg:incr}.  The algorithm takes as
input a parameterized program $P$ and returns ``Yes'' if $P$ terminates,
``No'' if $P$ has a trace that can be proved non-terminating, and ``Unknown''
if the algorithm encounters a trace it cannot prove to be terminating or
non-terminating.  (There is also a fourth possibility that the algorithm runs
forever, repeatedly sampling traces but never finding a termination argument that
generalizes to the whole program).

\begin{algorithm}
  \SetAlgoLined\DontPrintSemicolon
  \SetKwInOut{Input}{Input} \SetKwInOut{Output}{Output} \Input{Parameterized
    program $P$}

  $B \gets \emptyset$ \tcc*{Initialize the basis, $B$}
  \tcc{Has every program trace been proved infeasible?}
  \While{$\lang(P) \nsubseteq \omega(\closure{B})$}{
    \tcc{Sample a possibly-feasible trace}
    Pick $\tau \in \lang(P) \setminus \omega(\closure{B})$\;
    \Switch{$\textsf{FindInfeasibilityProof}(\tau)$}{
      \Case{Infeasibility proof $\Pi$}{
        Construct $B'$ from $\Pi$ so that $\tau \in \omega(\closure{B'})$\;
        $B \gets B + B'$
      }
      \Case{Feasibility proof $\overline{\Pi}$}{
        \Return{No} \tcc*{$P$ is non-terminating}
      }
      \Other{
        \Return{Unknown} \tcc*{Inconclusive}
      }
    }
  }
  \Return{Yes}\tcc*{$P$ is terminating}
  \caption{Incremental algorithm for parameterized program termination \label{alg:incr}}
\end{algorithm}

Algorithm~\ref{alg:incr} builds a well-founded proof space by repeatedly
sampling traces of $P$, finding infeasibility proofs for the samples, and then
assembling the proofs into a well-founded proof space.  More precisely, the
algorithm builds a \emph{basis} $B$ for a proof space, which can be seen as a
finite set of axioms that generates a (typically infinite) well-founded proof
space $\closure{B}$.  The well-founded proof space $\closure{B}$ serves as an
infeasibility proof for a set of traces, which is denoted
$\omega(\closure{B})$ (Definition~\ref{def:omega-h}).  The goal of the
algorithm is to construct a basis for a well-founded proof space that proves
the infeasibility of \emph{every} program trace (at line 2, $\lang(P) \subseteq
\omega(\closure{B})$): if the algorithm succeeds in doing so, then $P$ terminates.


We will illustrate the operation of this algorithm on the simple example
pictured in Figure~\ref{fig:incdec}.  The algorithm begins with an empty basis
$B$ (at line 1): the empty basis generates an empty well-founded proof space
$\closure{B}$ that proves infeasibility of an empty set of traces (i.e.,
$\omega(\closure{B}) = \emptyset$).  Since the inclusion $\lang(P) \subseteq
\omega(\closure{B})$ does not hold (at line 2), we sample (at line 3) a
\emph{possibly-feasible} program trace $\tau \in \lang(P) \setminus
\omega(\closure{B})$ (we delay the discussion of how to verify the inclusion
$\lang(P) \subseteq \omega(\closure{B})$ to Section~\ref{sec:check}).  Suppose
that our choice for $\tau$ is the trace pictured in
Figure~\ref{fig:lasso-ex}(a), in which a single thread (Thread 1) executes the
loop forever.  This trace is \emph{ultimately periodic}: $\tau$ is of the form
$\pi \cdot \rho^\omega$, where $\pi$ (the \emph{stem}) and $\rho$ (the
\emph{loop}) are finite traces.  Under reasonable assumptions (that we
formalize in Section~\ref{sec:lasso-cex}) we ensure that sample traces
(counter-examples to the inclusion $\lang(P) \subseteq \omega(\closure{B})$)
are ultimately periodic.  The importance of ultimate periodicity is two-fold:
first, ultimately periodic traces have a (non-unique) finite representation: a
pair of finite words $\tuple{\pi,\rho}$.  Second, ultimately periodic traces
correspond to a simple class of sequential programs, allowing
Algorithm~\ref{alg:incr} to leverage the wealth of techniques that have been
developed for proving termination
\cite{cav/BradleyMS05,vmcai/PodelskiR04,conf/atva/HeizmannHLP13} and
non-termination \cite{Gupta2008}.  The auxiliary procedure
\textsf{FindInfeasibilityProof} denotes an (unspecified) algorithm that uses
such techniques to prove feasibility or infeasibility of a given trace.

Suppose that calling $\textsf{FindInfeasibilityProof}$ on the sample trace
$\tau$ gives the infeasibility proof pictured in Figure~\ref{fig:lasso-ex}(b)
and (c).  The infeasibility proof has two parts.  The first part is an
\emph{invariance proof}, which is a Hoare proof of an inductive invariant
(\mbox{\color{mygreen}$\texttt{d}(1) > 0$}) that supports the termination
argument.  The second part is a \emph{variance proof}, which is a Hoare proof
that (assuming the inductive invariant holds at the beginning of the loop)
executing the loop causes the state of the program to decrease in some
well-founded order.  This well-founded order is expressed by the \emph{ranking
  formula} {\color{purple}$\old{\texttt{x}} > \texttt{x} \land
  \old{\texttt{x}} \geq 0$} (the post-condition of the variance proof).  This
formula denotes a (well-founded) binary relation between the state of the
program and its \emph{old} state (the program state at the beginning of the
loop) that holds whenever the value of \texttt{x} decreases and was initially
non-negative.  Since there is no infinite descending sequence of program
states in this well-founded order, the trace $\tau$ (which executes the loop
infinitely many times) is infeasible.

We use the termination proof for $\tau$ to construct a basis $B'$ for a
well-founded proof space (at line 6).  This is done by breaking the
termination proof down into simpler components: the Hoare triples that were
used in the invariance and variance proofs, and the ranking formula that was
used in the variance proof.  The basis $B'$ constructed from
Figure~\ref{fig:lasso-ex} is pictured in Figure~\ref{fig:basis}.  We then add
$B'$ to the incrementally constructed basis $B$ (at line 7) and begin the loop again,
sampling another possibly-feasible trace \mbox{$\tau' \in \lang(P) \setminus
\omega(\closure{B})$}.

The incremental algorithm makes progress in the sense that it never samples
the same trace twice: if $\tau$ is sampled at some loop iteration, then $\tau
\in \omega(\closure{B})$ for all future iterations.  But in fact,
$\omega(\closure{B})$ contains infinitely many other traces, whose termination
proofs can be derived from the same basic building blocks (Hoare triples and
ranking formulas) as $\tau$.  For example, $\omega(\closure{B})$
contains all traces of the form
\[\ic{\texttt{x=pos()}}{\idx{i}}\ic{\texttt{d=pos()}}{\idx{i}}\big(\ic{\texttt{[x>0]}}{\idx{i}}\ic{\texttt{x=x-d}}{\idx{i}}\big)^\omega \]
(all of which are, intuitively, infeasible for the same reason as $\tau$).
The essential idea is that new Hoare triples and ranking formulas can
be deduced from the ones that appear in the basis $B$ by applying some simple
inference rules.  The resulting collections of Hoare triples and ranking
formulas (which are closed under these inference rules) forms a well-founded
proof space $\closure{B}$.  Thus in Algorithm~\ref{alg:incr}, well-founded
proof spaces serve as a mechanism for \emph{generalizing} infeasibility
proofs: they provide an answer to the question \emph{given infeasibility
  proofs for a finite set of sample traces, how can we re-arrange the
  ingredients of those proofs to form infeasibility proofs for other traces?}

We will stop our demonstration of Algorithm~\ref{alg:incr} here, concluding
with a listing of the remaining Hoare triples that must be discovered by the
algorithm to complete the proof (that is, if those triples are added to the
basis $B$, then $\omega(\closure{B})$ contains $\lang(P)$):
\begin{center}
    $\hoare{\texttt{d}(1) > 0}{\ic{\texttt{x=pos()}}{2}}{\texttt{d}(1) > 0}$\\
    $\hoare{\texttt{d}(1) > 0}{\ic{\texttt{d=pos()}}{2}}{\texttt{d}(1) > 0}$\\
    $\hoare{\texttt{d}(1) > 0}{\ic{\texttt{[x>0]}}{2}}{\texttt{d}(1) > 0}$\\
    $\hoare{\texttt{d}(1) > 0}{\ic{\texttt{x=x-d}}{2}}{\texttt{d}(1) > 0}$\\
    $\hoare{\old{\texttt{x}} \geq 0}{\ic{\texttt{[x>0]}}{1}}{\old{\texttt{x}} \geq 0}$\\
    $\hoare{\old{\texttt{x}} \geq 0}{\ic{\texttt{x=x-d}}{1}}{\old{\texttt{x}} \geq 0}$\\
    $\hoare{\old{\texttt{x}} > \texttt{x}}{\ic{\texttt{[x>0]}}{1}}{\old{\texttt{x}} \geq 0}$\\
    $\hoare{\texttt{d}(1) > 0 \land \old{\texttt{x}} > \texttt{x}}{\ic{\texttt{x=x-d}}{1}}{\old{\texttt{x}} > \texttt{x}}$\ .
  \end{center}

The remainder of this section is organized as follows: in
Section~\ref{sec:formalization}, we give the formal definition of well-founded
proof spaces, and describe how a well-founded proof space proves infeasibility
of an infinite set of traces.  This section treats well-founded proof spaces
as a mathematical object, divorcing it from its algorithmic side.  In
Section~\ref{sec:lasso-cex}, we describe \emph{regular} well-founded proof
spaces, a restricted form of well-founded proof spaces.  The key result in
this section (Theorem~\ref{thm:up}) is that to prove parameterized program
termination, it is sufficient for a \emph{regular} proof space to prove that
the ultimately periodic traces of the program terminate.

\begin{figure}
  \begin{minipage}[b]{2.4cm}
    \texttt{global int x}\\
    \texttt{local int d}\\
    \texttt{x = pos()}\\
    \texttt{d = pos()}\\
      \texttt{while (x > 0):}\\
      \hspace*{0.25cm}\texttt{x = x - d}
  \end{minipage}
  \hfill
  \begin{minipage}[b]{6cm}
    \flushright
    \begin{tikzpicture}[base,node distance=1.75cm]
      \node [circle,draw] (1) {1};
      \node [circle,draw,right of=1] (2) {2};
      \node [circle,draw,right of=2] (3) {3};
      \node [circle,draw,right of=3] (4) {4};
      \draw (1) edge[->] node[above]{\texttt{x=pos()}} (2);
      \draw (2) edge[->] node[above]{\texttt{d=pos()}} (3);
      \draw (3) edge[->,bend left] node[above]{\texttt{[x>0]}} (4);
      \draw (4) edge[->,bend left] node[below]{\texttt{x=x-d}} (3);
    \end{tikzpicture}
  \end{minipage}
\caption{Decrement example, pictured along side its control flow graph.  The
  expression \texttt{pos()} denotes a non-deterministically generated positive
  integer, and the command \texttt{[x>0]} is an \emph{assumption}; its execution does not change the state of the program, but it can only proceed when \texttt{x} is greater than 0. \label{fig:incdec}}
\end{figure}
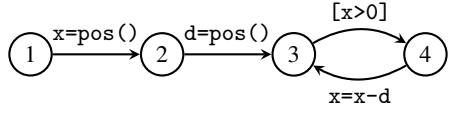

\begin{figure}
  \centering
  \subfigure[An ultimately periodic trace of Figure~\ref{fig:incdec}]{
    $\underbrace{\ic{\texttt{x=pos()}}{1}\ic{\texttt{d=pos()}}{1}}_{\text{Stem}}\hspace*{5pt}(\underbrace{\ic{\texttt{[x>0]}}{1}\ic{\texttt{x=x-d}}{1}}_{\text{Loop}})^\omega$
  }
  
  \subfigure[Invariance proof]{
    \begin{minipage}[b]{2.5cm}
      \begin{center}
        $\{ \true \}$\\
        $\ic{\texttt{x=pos()}}{1}$\\
        $\{ \true \}$\\
        $\ic{\texttt{d=pos()}}{1}$\\
        $\{ {\color{mygreen}\texttt{d}(1) > 0} \}$\\
        $\ic{\texttt{[x>0]}}{1}$\\
        $\{ \texttt{d}(1) > 0 \}$\\
        $\ic{\texttt{x=x-d}}{1}$\\
        $\{ {\color{mygreen}\texttt{d}(1) > 0} \}$
      \end{center}
    \end{minipage}
  }
  \hfill
  \subfigure[Variance proof]{
    \begin{minipage}[b]{5cm}
    \begin{center}
      $\{ {\color{mygreen}\texttt{d}(1) > 0} \land \old{\texttt{x}} = \texttt{x} \}$\\
      $\ic{\texttt{[x>0]}}{1}$\\
      $\{ \texttt{d}(1) > 0 \land \old{\texttt{x}} = \texttt{x} \land \old{\texttt{x}} \geq 0 \}$\\
      $\ic{\texttt{x=x-d}}{1}$\\
      $\{ {\color{purple}\old{\texttt{x}} > \texttt{x} \land \old{\texttt{x}} \geq 0} \}$
    \end{center}
    \end{minipage}
  }
  \caption{An ultimately periodic trace and termination proof. \label{fig:lasso-ex}}
\end{figure}

\begin{figure}
  \textbf{Hoare triples}:
  \begin{center}
      $\hoare{\true}{\ic{\texttt{x=pos()}}{1}}{\true}$\\
      $\hoare{\true}{\ic{\texttt{d=pos()}}{1}}{\texttt{d}(1) > 0}$\\
      $\hoare{\texttt{d}(1) > 0}{\ic{\texttt{[x>0]}}{1}}{\texttt{d}(1) > 0}$\\
      $\hoare{\texttt{d}(1) > 0}{\ic{\texttt{x=x-d}}{1}}{\texttt{d}(1) > 0}$\\
      $\hoare{\old{\texttt{x}} = \texttt{x}}{\ic{\texttt{[x>0]}}{1}}{\old{\texttt{x}} = \texttt{x}}$\\
      $\hoare{\old{\texttt{x}} = \texttt{x}}{\ic{\texttt{[x>0]}}{1}}{\old{\texttt{x}} \geq 0}$\\      
      $\hoare{\texttt{d}(1) > 0 \land \old{\texttt{x}} = \texttt{x}}{\ic{\texttt{x=x-d}}{1}}{\old{\texttt{x}} > \texttt{x}}$\\
      $\hoare{\old{\texttt{x}} \geq 0}{\ic{\texttt{x=x-d}}{1}}{\old{\texttt{x}} \geq 0}$
  \end{center}
  \textbf{Ranking formula}: $\old{\texttt{x}} > \texttt{x} \land \old{\texttt{x}} \geq 0$
    \caption{Basis computed from the termination proof in Figure~\ref{fig:lasso-ex} \label{fig:basis}}
\end{figure}

\subsection{Formal definition of Well-founded proof spaces} \label{sec:formalization}

A well-founded proof space is a set of Hoare triples and a set of ranking
terms, both closed under certain rules of inference.  They serve two roles.
First, they are the core of a proof rule for parameterized program
termination.  A well-founded proof space acts as a termination certificate for
a set of \itrace{}s (Definition~\ref{def:omega-h}); we may prove that a
program $P$ terminates by showing that all traces of $\lang(P)$ are contained
inside this set.  Second, well-founded proof spaces are a mechanism for
\emph{proof generalization}: starting from a (finite) \emph{basis} of Hoare
triples, we can take the closure of the basis under some simple inference
rules to form a well-founded proof space that proves the termination of a
larger set of traces (Definition~\ref{def:basis}).  We will now define these
notions formally.



We begin by formalizing the components of well-founded proof spaces,
\emph{Hoare triples} and \emph{ranking formulas}, and their inference rules.

A \emph{Hoare triple}
\[ \hoare{\phi}{\ic{\sigma}{\idx{i}}}{\psi} \]
consists of an indexed command $\ic{\sigma}{\idx{i}}$ and two program assertions $\phi$ and $\psi$ (the pre- and post-condition of the triple, respectively).
We say that such a triple is \emph{valid} if for any pair
of program states $s,s'$ such that $s \models \phi$ and $s
\xrightarrow{\ic{\sigma}{\idx{i}}} s'$, we have $s' \models \psi$.

We can infer new valid Hoare triples from a set of given ones using the
inference rules of proof spaces, namely \textsc{Sequencing},
\textsc{Symmetry}, and \textsc{Conjunction} \cite{Farzan2015}.  We will recall
the definition of these three rules below.

\textsc{Sequencing} is a variation of the classical sequencing
rule of Hoare logic.  For example, we may sequence the two triples
\[\hoare{\true}{\ic{\texttt{d=pos()}}{1}}{\texttt{d}(1) > 0}\text{ and}\]
\[\hoare{\texttt{d}(1) > 0}{\ic{\texttt{[x>0]}}{1}}{\texttt{d}(1) > 0} \]
to yield
\[\hoare{\true}{\ic{\texttt{d=pos()}}{1}\cdot\ic{\texttt{[x>0]}}{1}}{\texttt{d}(1) > 0}\ .\]

Two triples may be sequenced only when the post-condition of the first entails
the pre-condition of the first, according to a \emph{combinatorial entailment
  rule}.  The combinatorial entailment relation $\Vdash$ is defined as
\[
\phi_1 \land ... \land
\phi_n \Vdash \psi_1 \land ... \land \psi_m \mbox{ \ iff \ }\{ \phi_1,...,\phi_n \}
\supseteq \{\psi_1,..., \psi_m\}  
\]
(i.e., $\phi \Vdash \psi$ iff, viewed as sets of conjuncts, $\phi$ is a
superset of $\psi$).  Combinatorial entailment is a weaker version of logical entailment (which is used in the classical sequencing rule in Hoare logic).
Our sequencing rule can be written as follows:

\begin{mathpar}
  \inferrule[Sequencing]{
    \hoare{\phi_0}{\tau_0}{\phi_1}\\
    \phi_1 \Vdash \phi_1'\\
    \hoare{\phi_1'}{\tau_1}{\phi_2}
  }{
    \hoare{\phi_0}{\tau_0 \cdot \tau_1}{\phi_2}
  }
\end{mathpar}

\textsc{Symmetry} allows thread identifiers to be
substituted uniformly in a Hoare triple.  For example, from
\[\hoare{\true}{\ic{\texttt{d=pos()}}{1}}{\texttt{d}(1) > 0}\]
we may derive
\[\hoare{\true}{\ic{\texttt{d=pos()}}{2}}{\texttt{d}(2) > 0}\]
via the symmetry rule.  Given a permutation $\pi \in \mathbb{N} \rightarrow
\mathbb{N}$ and a program assertion $\phi$, we use $\phi[\pi]$ to denote the
result of substituting each indexed local variable $\iv{l}{\idx{i}}$ in $\phi$
with $\iv{l}{\pi(\idx{i})}$.  The symmetry rule may be written as follows:

\begin{mathpar}
  \inferrule[Symmetry]{
    \hoare{\phi}{\ic{\sigma_1}{\idx{i}_1}\dotsi\ic{\sigma_n}{\idx{i}_n}}{\psi} 
  }{  \hoare{\phi[\pi]}{\ic{\sigma_1}{\pi(\idx{i}_1)}\dotsi\ic{\sigma_n}{\pi(\idx{i}_n)}}{\psi[\pi]} 
  }
{\mbox{ \ 
    \begin{tabular}{l}
      $\pi  : \mathbb{N} \rightarrow \mathbb{N}$ \\[-4pt] is a permutation 
    \end{tabular}

}}
\end{mathpar}

\textsc{Conjunction} is precisely the conjunction
rule of Hoare logic.  For example, from the triples
\[\hoare{\texttt{d}(1) > 0}{\ic{\texttt{[x>0]}}{1}}{\texttt{d}(1) > 0}\text{ and} \]
\[\hoare{\old{\texttt{x}} = \texttt{x}}{\ic{\texttt{[x>0]}}{1}}{\old{\texttt{x}} \geq 0}\]
we may derive
\[\hoare{\texttt{d}(1) > 0 \land \old{\texttt{x}} = \texttt{x}}{\ic{\texttt{[x>0]}}{1}}{\texttt{d}(1) > 0 \land \old{\texttt{x}} \geq 0}\ .\]
The conjunction rule can be written as follows:

\begin{mathpar}
  \inferrule[Conjunction]{
    \hoare{\phi_1}{\tau}{\psi_1}\\
    \hoare{\phi_2}{\tau}{\psi_2}
  }{
    \hoare{\phi_1 \land \phi_2}{\tau}{\psi_1 \land \psi_2}
  }
\end{mathpar}

A \emph{proof space} is defined to be a set of valid Hoare triples that is
closed under these three rules \cite{Farzan2015}.  Proof spaces were used in
\cite{Farzan2015} to prove infeasibility of a set of \emph{finite} traces.  To
form a \emph{well-founded proof space}, which proves infeasibility of a set of
\emph{infinite} traces, we enrich a proof space with a set of \emph{ranking
  formulas}.

A ranking formula is a logical representation of a well-founded order on
program states.  We suppose that each program variable \texttt{x} has an
associated \emph{old} version $\textit{old}(\texttt{x})$ that allows formulas
to refer to the value of \texttt{x} in some ``previous'' state.  Any such
formula $\phi$ can be interpreted as a binary relation $R_\phi$ on states,
with $s \;R_\phi\; s'$ iff $\phi$ holds in the interpretation that uses $s$ to
interpret the \emph{old} variables and $s'$ to interpret the rest.  A
\emph{ranking formula} is defined to be a formula $w$ over the program
variables and their $\textit{old}$ copies such that the relation $R_w$ is a
well-founded order.

The only inference rule that we consider for ranking formulas is a
\emph{symmetry} rule: if $w$ is a ranking formula and $\pi : \mathbb{N}
\rightarrow \mathbb{N}$ is a permutation of thread identifiers, then
$w[\pi]$ is a ranking formula.

We may now define well-founded proof spaces formally:
\begin{definition}[Well-founded proof space]
  A \emph{well-founded proof space} $\tuple{\mathscr{H},\rankformulas}$ is a
  pair consisting of a set of Hoare triples $\mathscr{H}$ and a set of ranking
  formulas $\rankformulas$ such that $\mathscr{H}$ is closed under {\sc
    Sequencing}, {\sc Symmetry}, and {\sc Conjunction}, and $\rankformulas$ is
  closed under permutations of thread identifiers.
\end{definition}

We may present a well-founded proof as the closure of some \emph{basis}
(perhaps constructed from termination proofs of some small set of sample
traces).  Formally,
\begin{definition}  \label{def:basis}
Let $H$ be a set of valid Hoare triples, and let $W$ be a set of ranking
formulas.  $H$ and $W$ \emph{generate} a well-founded proof space
$\closure{H,W}$, defined to be the smallest well-founded proof space
$\tuple{\mathscr{H},\mathscr{W}}$ such that $H \subseteq \mathscr{H}$ and $W
\subseteq \mathscr{W}$.  We say that $\tuple{H,W}$ is a \emph{basis} for
$\tuple{\mathscr{H},\mathscr{W}}$.
\end{definition}
The fact that $\closure{H,W}$ is well-defined (i.e., contains only
\emph{valid} Hoare triples and ranking functions) follows immediately from the
soundness of the inference rules for well-founded proof spaces.

We associate with each well-founded proof space the set of all \itrace{}s that it proves infeasible.  Intuitively, a well-founded proof
space proves that a trace $\tau$ is infeasible by exhibiting a ranking formula
$w$ and a decomposition of $\tau$ into (infinitely many) finite segments
$\tau = \tau_0\tau_1\tau_2 \dotsi$ such that for each $i$, the program state
decreases in the order $w$ along each segment $\tau_i$.  More formally,
\begin{definition} \label{def:omega-h}
Let $\tuple{\mathscr{H},\rankformulas}$ be a well-founded proof space.  We define the set
$\omega(\mathscr{H},\rankformulas)$ of \itrace{}s recognized by
$\tuple{\mathscr{H},\rankformulas}$ to be the set of \itrace{}s $\tau =
\ic{\sigma_1}{\idx{i}_1} \ic{\sigma_2}{\idx{i}_2} \dotsi \in \Sigma(N)^\omega$ such that $\{
\idx{i}_k : k \in \mathbb{N} \}$ is finite and there exists a sequence of
naturals $\{ \alpha_k \}_{k \in \mathbb{N}}$, and a ranking formula $\rankformula \in \rankformulas$
such that:
\begin{enumerate}
\item For all $k \in \mathbb{N}$, $\alpha_k < \alpha_{k+1}$
\item For all $k \in \mathbb{N}$, there exists some formula $\phi$ such that
\[\hoare{\true}{\tau[1, \alpha_k]}{\phi} \in \mathscr{H}\text{, and}\]
\[\hoare{\phi \land \hspace*{-7pt}\bigwedge_{x \in \textsf{Var}(N)}\hspace*{-10pt} \old{x} = x}{\tau[\alpha_k+1,\alpha_{k+1}]}{\rankformula} \in \mathscr{H}   \let\qedsymbol\romanqed\qedhere\]
\end{enumerate}
\end{definition}

The fundamental property of interest concerning the definition of
$\omega(\mathscr{H},\rankformulas)$ is the following soundness theorem:

\begin{restatable}[Soundness]{theorem}{thmsoundness}
  \label{thm:soundness}
  Let $\tuple{\mathscr{H},\rankformulas}$ be a well-founded proof space.  Then every 
  \itrace{} in $\omega(\mathscr{H},\rankformulas)$ is infeasible.
\end{restatable}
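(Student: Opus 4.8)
The plan is to prove the statement by reductio: assuming some $\tau \in \omega(\mathscr{H},\rankformulas)$ were feasible, I would manufacture an infinite descending chain in the well-founded order $R_{\rankformula}$, contradicting the very definition of a ranking formula. First I would record the only fact about the closure conditions that the proof actually needs, namely that every triple of $\mathscr{H}$ is \emph{valid}: a proof space is built from valid axioms, and {\sc Sequencing}, {\sc Symmetry}, and {\sc Conjunction} all preserve validity, so the entire set $\mathscr{H}$ is sound. Now suppose $\tau = \ic{\sigma_1}{\idx{i}_1}\ic{\sigma_2}{\idx{i}_2}\dotsi$ is feasible, witnessed by an execution $s_0 \xrightarrow{\tau_1} s_1 \xrightarrow{\tau_2} \dotsi$ from an initial state $s_0$, and fix the checkpoints $\{\alpha_k\}$, the per-checkpoint invariants $\phi_k$, and the ranking formula $\rankformula \in \rankformulas$ supplied by Definition~\ref{def:omega-h}. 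Since $\{\idx{i}_k : k \in \mathbb{N}\}$ is finite, $\tau \in \iSigma{N}^\omega$ for some $N$, and hence $\iVar{N}$ is finite and $\bigwedge_{x \in \iVar{N}}\old{x} = x$ is a bona fide finite assertion.

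The crux is to show that consecutive checkpoint states decrease, i.e. $s_{\alpha_k}\,R_{\rankformula}\,s_{\alpha_{k+1}}$ for every $k$. From the first triple $\hoare{\true}{\tau[1,\alpha_k]}{\phi_k} \in \mathscr{H}$, its validity, and $s_0 \models \true$, I obtain $s_{\alpha_k} \models \phi_k$. I would then instrument the execution of the segment $\tau[\alpha_k+1,\alpha_{k+1}]$ with the \emph{old} copies treated as ghost variables: extend each $s_j$ (for $\alpha_k \le j \le \alpha_{k+1}$) so that every $\old{x}$ is assigned the value $s_{\alpha_k}(x)$ and is left untouched by all transitions. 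The extended starting state then satisfies $\phi_k \land \bigwedge_{x}\old{x} = x$, while the extended ending state interprets the current variables by $s_{\alpha_{k+1}}$ and the old variables by $s_{\alpha_k}$. Applying the validity of the second triple $\hoare{\phi_k \land \bigwedge_x \old{x}=x}{\tau[\alpha_k+1,\alpha_{k+1}]}{\rankformula} \in \mathscr{H}$ to this instrumented run yields that the extended ending state satisfies $\rankformula$, which is by definition precisely $s_{\alpha_k}\,R_{\rankformula}\,s_{\alpha_{k+1}}$.

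Assembling these relations over all $k$ produces an infinite sequence $s_{\alpha_1}\,R_{\rankformula}\,s_{\alpha_2}\,R_{\rankformula}\,s_{\alpha_3}\dotsi$, which is well defined because the $\alpha_k$ strictly increase, so the segments genuinely abut. This is an infinite descending chain in $R_{\rankformula}$, contradicting the fact that $R_{\rankformula}$ is a well-founded order; therefore $\tau$ must be infeasible. I expect the one genuinely delicate step to be the ghost-variable bookkeeping of the middle paragraph: making rigorous that the \emph{old} copies are frozen at each segment's starting values, so that the precondition $\old{x} = x$ holds at the segment's start and the postcondition $\rankformula$ at its end faithfully encodes the binary relation between the two checkpoint states $s_{\alpha_k}$ and $s_{\alpha_{k+1}}$. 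The remaining steps are routine once the semantics of \emph{old} and of $R_{\rankformula}$ are pinned down.
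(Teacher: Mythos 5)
Your proposal is correct and follows essentially the same route as the paper's proof: assume a feasible trace in $\omega(\mathscr{H},\rankformulas)$, extract the checkpoints $\{\alpha_k\}$ and ranking formula $\rankformula$ from Definition~\ref{def:omega-h}, and derive a contradiction by exhibiting $\{s_{\alpha_k}\}$ as an infinite descending sequence in the well-founded order $R_{\rankformula}$. The paper compresses the middle portion into ``it is straightforward to show''; your ghost-variable treatment of the \emph{old} copies and the observation that the closure rules preserve validity are exactly the details that remark elides.
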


Theorem~\ref{thm:soundness} is the basis of our proof rule for termination:
for a given program $P$, if we can derive a well-founded proof space
$\tuple{\mathscr{H},\rankformulas}$ such that
$\omega(\mathscr{H},\rankformulas)$ contains all the traces of $P$, then $P$
terminates.  This proof rule justifies the soundness of
Algorithm~\ref{alg:incr}.


\subsection{Ultimately periodic traces} \label{sec:lasso-cex}

Algorithm~\ref{alg:incr} relies on an auxiliary procedure
\textsf{FindInfeasibilityProof} to prove infeasibility of sample traces
(counter-examples to the inclusion $\lang(P) \subseteq \omega(\closure{B})$).
An attractive way of implementing \textsf{FindInfeasibilityProof} is to use  existing sequential program termination techniques
\cite{cav/BradleyMS05,vmcai/PodelskiR04,conf/atva/HeizmannHLP13} that are
very good at proving termination by synthesizing ranking functions for
programs of a restricted form, namely so-called \emph{lasso programs}.  To
take advantage of these techniques, we must ensure that the sample traces
given to \textsf{FindInfeasibilityProof} are \emph{ultimately periodic}, so
that they may be represented by lasso programs.  This section defines a
\emph{regularity} condition on well-founded proof spaces that enables us to
ensure that ultimately periodic counter-examples always exist.


An ultimately periodic trace is an \itrace{} of the form $\pi \cdot \rho^\omega$,
where $\pi$ and $\rho$ are finite traces.
Such a trace corresponds to a \emph{lasso program}, a sequential program that
executes the sequence $\pi$ followed by $\rho$ inside of a \textbf{while} loop
(since only finitely many threads are involved in $\pi$ and $\rho$, the local
variables of each thread may be renamed apart).

\parpic[r]{
  \begin{minipage}{2.75cm}
\texttt{x=0}\\
\textbf{while}\texttt{(true):}\\
\hspace*{0.25cm}\texttt{i = 0}\\
\hspace*{0.25cm}\textbf{while}\texttt{(i < x):}\\
\hspace*{0.5cm}\texttt{i++}\\
\hspace*{0.25cm}\texttt{x++}
\end{minipage}
} The question in this sub-section is \emph{how can we prove parameterized program
  termination while sampling only the counter-example traces to the sufficiency of the proof argument that are ultimately periodic}?  Phrased
differently, \emph{is proving termination of ultimately periodic traces enough
  to prove parameterized program termination?}  The (sequential) program to
the right illustrates the potential pitfall: even though every ultimately
periodic trace of the program is infeasible, the program does not terminate.

We place restrictions on well-founded proof spaces so that any (suitably restricted)
well-founded proof space that proves termination of all ultimately periodic
program traces inevitably proves termination of all program traces (i.e., if $\omega(\mathscr{H},\mathscr{W})$ includes all ultimately periodic
traces in $\lang(P)$, it inevitably contains all of $\lang(P)$).  These
restrictions are somewhat technical, and can be skipped on a first reading.

First, we exclude Hoare triples in
which local variables ``spontaneously appear'', such as \texttt{x}(2) in:
\[\hoare{\true}{\ic{\texttt{x = 0}}{1}}{\texttt{x}(1) = \texttt{x}(2) \lor \texttt{x}(1)=0}\]
This triple is valid, but the appearance of $\texttt{x}(2)$ in the post-condition is
arbitrary.  This technical restriction is formalized by well-formed Hoare
triples:
\begin{definition}[Well-formed Hoare triple] \label{def:wfht}
  A Hoare triple \[\hoare{\phi}{\tau}{\psi}\] is \emph{well-formed} if for each
  $\idx{i} \in \mathbb{N}$ such that an indexed local variable of the form
  $x(\idx{i})$ appears in the post-condition $\psi$, then either $\idx{i}$
  executes some command along $\tau$ or $x(\idx{i})$ or some other indexed
  local variable $y(\idx{i})$ with the same index $\idx{i}$ appears in the
  pre-condition $\phi$.
\end{definition}

The second restriction we make is to require that the well-founded proof space is generated by a finite basis in which there are no ``weak'' Hoare triples.  There are two types of weakness we prohibit. First, we exclude Hoare triples with conjunctive post-conditions
\[\hoare{\phi}{\tau}{\psi_1 \land \psi_2}\]
because such a triple can be derived from the pair 
\[\hoare{\phi}{\tau}{\psi_1} \text{ and } \hoare{\phi}{\tau}{\psi_2}\]
via the {\sc Conjunction} rule.  Second, we exclude Hoare triples for traces of length greater than one
\[\hoare{\phi}{\tau\cdot\ic{\sigma}{\idx{i}}}{\psi}\]
because such a triple can be derived from the pair
\[\hoare{\phi}{\tau}{\phi'} \text{ and } \hoare{\phi'}{\ic{\sigma}{\idx{i}}}{\psi}\]
(for some choice of $\phi'$) via the {\sc Sequencing} rule.  We formalize
these restrictions with \emph{basic} Hoare triples:
\begin{definition}[Basic Hoare triple]
  A Hoare triple
  \[ \hoare{\phi}{\ic{\sigma}{\idx{i}}}{\psi} \]
  is \emph{basic} if it is valid, well-formed, and
  the post-condition $\psi$ is atomic in the sense that it cannot be
  constructed by conjoining two other formulas.
\end{definition}


We call a well-founded proof space that meets all of these technical
restrictions \emph{regular}.  Formally:
\begin{definition}[Regular] \label{def:finitely-generated} 
  We say that a well-founded proof space $\tuple{\mathscr{H},\mathscr{W}}$ is
  \emph{regular} if there exists a \emph{finite} set of \emph{basic} Hoare
  triples $H$ and a \emph{finite} set of ranking formulas $W$ such that
  $\tuple{H,W}$ generates $\tuple{\mathscr{H},\mathscr{W}}$.
\end{definition}

The justification for calling such proof spaces regular is that if
$\tuple{\mathscr{H},\mathscr{W}}$ is regular, then
$\omega(\mathscr{H},\mathscr{W})$ is ``nearly'' $\omega$-regular, in the sense
that $\omega(\mathscr{H},\mathscr{W}) \cap \Sigma(N)^\omega$ is
$\omega$-regular (accepted by a B\"{u}chi automaton) for all $N \in
\mathbb{N}$.



Finally, we state the main result of this sub-section: regular
well-founded proof spaces guarantee the existence of
ultimately periodic counter-examples.  More precisely, if there is a sample
program trace that \emph{cannot} be proved terminating by a given regular
well-founded proof space, then there is also an \emph{ultimately
  periodic} counter-example.



\begin{theorem} \label{thm:up}
  Let $P$ be a parameterized program and let
  $\tuple{\mathscr{H},\rankformulas}$ be a regular well-founded proof space.  If
  every \emph{ultimately periodic}
  program trace $\pi \cdot\rho^\omega \in \lang(P)$ is included in
  $\omega(\mathscr{H},\rankformulas)$, then every program trace $\tau \in \lang(P)$ is included in
  $\omega(\mathscr{H},\rankformulas)$.
\end{theorem}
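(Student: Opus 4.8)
The plan is to reduce the inclusion over the infinite alphabet to a family of inclusions over finite alphabets, one per thread count $N$, and then to invoke the classical fact that an $\omega$-regular language is determined by the set of ultimately periodic words it contains. First I would fix $N \in \mathbb{N}$ and observe that, since $\lang(P) = \bigcup_{N} \lang(P(N))$, it suffices to prove $\lang(P(N)) \subseteq \omega(\mathscr{H},\rankformulas)$ for every $N$. Both sides of this inclusion are $\omega$-regular languages over the \emph{finite} alphabet $\iSigma{N}$: the left-hand side is by definition the language of the B\"{u}chi automaton $P(N)$, and the right-hand side, $\omega(\mathscr{H},\rankformulas) \cap \iSigma{N}^\omega$, is $\omega$-regular precisely because $\tuple{\mathscr{H},\rankformulas}$ is regular (the remark following Definition~\ref{def:finitely-generated}). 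Note that every trace in $\iSigma{N}^\omega$ uses only thread identifiers from $\natsleq{N}$, so the finiteness condition on thread identifiers in Definition~\ref{def:omega-h} is automatically satisfied here.

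The core of the argument is the standard automata-theoretic fact that an inclusion $L_1 \subseteq L_2$ between two $\omega$-regular languages holds if and only if every ultimately periodic word of $L_1$ lies in $L_2$. Writing $L_1 = \lang(P(N))$ and $L_2 = \omega(\mathscr{H},\rankformulas) \cap \iSigma{N}^\omega$, I would consider the difference $L_1 \cap \overline{L_2}$, which is again $\omega$-regular since $\omega$-regular languages are closed under complementation and intersection. If the inclusion failed, this difference would be nonempty, and a nonempty $\omega$-regular language always contains an ultimately periodic word $\pi \cdot \rho^\omega$ (a lasso). Such a word would be an ultimately periodic trace in $\lang(P(N)) \subseteq \lang(P)$ that does not belong to $\omega(\mathscr{H},\rankformulas)$, directly contradicting the hypothesis of the theorem. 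Hence $L_1 \subseteq L_2 \subseteq \omega(\mathscr{H},\rankformulas)$ for every $N$, and the theorem follows.

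The hard part is not this final lasso argument, which is routine, but the input on which it rests: the $\omega$-regularity of $\omega(\mathscr{H},\rankformulas) \cap \iSigma{N}^\omega$. Although this is stated as a consequence of regularity, making it rigorous requires constructing a B\"{u}chi automaton directly from a finite basis $\tuple{H,W}$ of \emph{basic} Hoare triples generating $\tuple{\mathscr{H},\rankformulas}$. The crucial observation is that, after restricting attention to the alphabet $\iSigma{N}$ and closing the atomic post-conditions of $H$ under the finitely many permutations of $\natsleq{N}$ admitted by \textsc{Symmetry}, only finitely many atomic assertions are relevant; an automaton state can then record a conjunction of such assertions — the ``current'' derivable post-condition in the sense of Definition~\ref{def:omega-h} — of which there are only finitely many. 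The automaton guesses the ranking formula $\rankformula \in W$ and the decomposition points $\alpha_k$, propagates the post-condition using \textsc{Sequencing} and \textsc{Conjunction}, and uses B\"{u}chi acceptance to certify that a checkpoint re-establishing $\rankformula$ from $\phi \land \bigwedge_{x} \old{x}=x$ is reached infinitely often. I expect the delicate step to be verifying that basicness and well-formedness are exactly what keep this state space finite, so that the automaton recognizes $\omega(\mathscr{H},\rankformulas) \cap \iSigma{N}^\omega$ and no more.
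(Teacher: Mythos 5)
Your proposal is correct and follows essentially the same route as the paper's proof: reduce to the per-$N$ inclusions $\lang(P(N)) \subseteq \omega(\mathscr{H},\rankformulas) \cap \iSigma{N}^\omega$, invoke the classical fact that an inclusion between $\omega$-regular languages over a finite alphabet is witnessed by its ultimately periodic words, and rest the whole argument on the $\omega$-regularity of $\omega(\mathscr{H},\rankformulas) \cap \iSigma{N}^\omega$, which both you and the paper justify only by a sketched B\"{u}chi automaton construction from the finite basis of basic, well-formed triples. The only cosmetic differences are that you inline the standard proof of the classical fact (complementation, closure under intersection, and lasso extraction from a nonempty $\omega$-regular language) where the paper merely cites it, and your automaton sketch is somewhat more explicit about how basicness and well-formedness keep the state space finite.
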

\begin{proof}
  For any set of traces $L \subseteq \bigcup_N\iSigma{N}^\omega$, we define
  $\UP(L)$ to be the set of ultimately periodic traces that belong to $L$:
\[ \UP(L) = \{ \tau \in L : \exists \pi,\rho \in \iSigma{N}^*. \tau = \pi  \rho^\omega) \}\ . \]

   We must prove that if $\UP(\lang(P)) \subseteq \omega(\mathscr{H},\rankformulas)$ then $\lang(P)) \subseteq \omega(\mathscr{H},\rankformulas)$.  We suppose that $\UP(\lang(P)) \subseteq \omega(\mathscr{H},\rankformulas)$ and (recalling the definition $\lang(P) = \bigcup_N \lang(P(N))$) prove that for all $N \in \mathbb{N}$, the inclusion $\lang(P(N)) \subseteq \omega(\mathscr{H},\rankformulas)$ holds.

   Let $N$ be an arbitrary natural number.
   Since $\UP(\lang(P(N))) \subseteq \UP(\lang(P)) \subseteq \omega(\mathscr{H},\rankformulas)$ we have
   \[ \UP(\lang(P(N))) \cap \Sigma(N)^\omega \subseteq \omega(\mathscr{H},\rankformulas) \cap \Sigma(N)^\omega\ .\]
   Since (by
  definition) $\lang(P(N)) \subseteq \Sigma(N)^\omega$, we can simplify:
  \begin{equation}
    \UP(\lang(P(N))) \subseteq \omega(\mathscr{H},\rankformulas) \cap \Sigma(N)^\omega\ .
    \label{eq:up}
  \end{equation}

  It is a well known fact that if $L_1$ and $L_2$ are $\omega$-regular
  languages (over a finite alphabet), then $\UP(L_1) \subseteq L_2$
  implies $L_1 \subseteq L_2$.  The language $\lang(P(N))$ is $\omega$-regular
  by definition, so if we can show that $\omega(\mathscr{H},\rankformulas) \cap
  \iSigma{N}^\omega$ is $\omega$-regular, then Inclusion~(\ref{eq:up})
  implies $\lang(P(N)) \subseteq \omega(\mathscr{H},\rankformulas) \cap \Sigma(N)^\omega$ and thus the desired result $\lang(P(N)) \subseteq
  \omega(\mathscr{H},\rankformulas)$.
 
  It remains only to show that $\omega(\mathscr{H},\rankformulas) \cap \iSigma{N}^\omega$
  is $\omega$-regular.  Here we will sketch the intuition \emph{why there
    exists} a B\"{u}chi automaton that recognizes $\omega(\mathscr{H},\rankformulas)
  \cap \iSigma{N}^\omega$.  Since $\tuple{\mathscr{H},\mathscr{W}}$ is regular, every
  Hoare triple in $\mathscr{H}$ is well-formed: this can be proved by
  induction on the derivation of the triple from the (well-formed) basis.  As a result, there
  are only finitely many program assertions that are relevant to the
  acceptance condition of a trace $\tau \in \iSigma{N}^\omega$ in
  $\omega(\mathscr{H},\rankformulas)$.  Intuitively, we can construct from this finite set
  of relevant assertions the finite state space of a B\"{u}chi automaton
  that recognizes $\omega(\mathscr{H},\rankformulas) \cap \iSigma{N}^\omega$.
\end{proof}

\textit{Discussion of Theorem~\ref{thm:up}.}
The example program above shows that it would not be sound to prove program
termination by proving termination of only its ultimately periodic program
traces.  However, it \emph{is} sound to check sufficiency of a candidate
regular well-founded proof space by inspecting only the ultimately periodic
program traces.  This soundness boils down to the fact that each infinite
execution involves only finitely many threads; more technically, the premise
of our proof rule (the inclusion between two sets of traces over an infinite
alphabet) is equivalent to the validity of an infinite number of inclusions
between $\omega$-regular languages over finite alphabets.

\section{Checking Proof Spaces} \label{sec:check}

The previous section defines a new proof rule for proving termination of
parameterized programs: given a parameterized program $P$, if there is some
well-founded proof space $\tuple{\mathscr{H},\rankformulas}$ such that
$\omega(\mathscr{H},\rankformulas)$ contains every trace of $P$, then $P$ terminates.
This section addresses two problems: (1) \emph{how can we verify that the
  premise of the proof rule holds?}, and (2) \emph{how can we generate an 
  ultimately periodic counter-example if it does not?}  The key idea in this section
is to reduce the problem of checking the premise (an inclusion problem for
sets of \itrace{}s over an unbounded alphabet) to a non-reachability
problem for a particular type of abstract machine (namely, \emph{quantified predicate automata}).

The first step in our reduction to non-reachability is to reduce the inclusion
$\lang(P) \subseteq \omega(\mathscr{H},\rankformulas)$ to an inclusion
problem on \ftrace{}s.  By Theorem~\ref{thm:up}, we know that it is sufficient
to check that the ultimately periodic traces of $\lang(P)$ are included in
 $\omega(\mathscr{H},\rankformulas)$.  Ultimately
periodic traces can be represented as finite traces which we call
\emph{lassos}.  A lasso is a finite trace of the form $\tau\$ \rho$, where
$\tau,\rho \in \iSigma{N}^*$ (for some $N$) and $\$$ is a special character not appearing in
$\iSigma{N}$.  A lasso $\tau\$\rho$ can be seen as a finite representation of
the ultimately periodic trace $\tau \cdot\rho^\omega$.  Note, however, that the
correspondence between lassos and ultimately periodic traces is not
one-to-one: an ultimately periodic trace $\tau\rho^\omega$ is represented by
infinitely many lassos, for example $\tau\$\rho, \tau\$\rho\rho,
\tau\rho\$\rho,$ and so on.

For a set of traces $L$, we define its lasso language as \[\$(L) = \{ \tau\$\rho
: \tau\cdot \rho^\omega \in L \} \] It is easy to show (using
Theorem~\ref{thm:up}) that the inclusion $\lang(P) \subseteq
\omega(\mathscr{H},\rankformulas)$ holds if and only if $\$(\lang(P)) \subseteq
\$(\omega(\mathscr{H},\rankformulas))$.  However, it is \emph{not} easy to give a direct
definition of $\$(\omega(\mathscr{H},\rankformulas))$ that lends
itself to recognition by an automaton of some variety.    Instead, we give an alternate lasso language $\$(\mathscr{H},\rankformulas)$
that is not exactly equal to $\$(\omega(\mathscr{H},\rankformulas))$, but (as we will see
in the following) is suitable for our purpose:

\begin{definition} \label{def:lasso-lang}
  Let $\tuple{\mathscr{H},\rankformulas}$ be a well-founded proof space.  Define
  $\$(\mathscr{H},\rankformulas)$ to be the set of lassos $\tau \$ \rho$ such that there
  is some $N \in \mathbb{N}$ so that $\tau,\rho \in \Sigma(N)^*$ and there
  exists some assertion $\phi$ and some ranking formula $\rankformula \in \rankformulas$ such that:
  \begin{enumerate}
  \item[i)] $\hoare{\true}{\tau}{\phi} \in \mathscr{H}$
  \item[ii)] $\hoare{\phi \land \bigwedge_{x \in \iVar{N}} \old{x} = x}{\tau}{\rankformula}
    \in \mathscr{H}$ \qedhere
  \end{enumerate}
\end{definition}

Note that $\$(\mathscr{H},\rankformulas)$ is neither a subset nor a superset of the set of
lassos $\$(\omega(\mathscr{H},\rankformulas))$ that correspond to ultimately periodic
words in $\omega(\mathscr{H},\rankformulas)$.  In fact, $\$(\mathscr{H},\rankformulas)$ may even
contain lassos $\tau\$\rho$ such that $\tau\cdot\rho^\omega$ is feasible:
consider for example the lasso $\ic{\texttt{y = 1}}{1} \$ \ic{\texttt{x = x -
    y}}{1}\ic{\texttt{y = -1}}{1}$: a well-founded proof space can prove
  that \texttt{x} decreases across the loop of this lasso, but this holds only
  for the \emph{first iteration} of the loop, and says nothing of subsequent
  iterations. Despite this, if the inclusion $\$(\lang(P)) \subseteq
  \$(\mathscr{H},\rankformulas)$ holds, then every trace of $P$ is proved infeasible by
  the well-founded proof space $\tuple{\mathscr{H},\rankformulas}$.  The intuition behind
  this argument is that if the inclusion $\$(\lang(P)) \subseteq
  \$(\mathscr{H},\rankformulas)$ holds, then for any ultimately periodic trace
  $\tau\cdot\rho^\omega$ of $\lang(P)$ \emph{every} representation of
  $\tau\cdot\rho^\omega$ as a lasso is included in $\$(\lang(P))$, and thus in
  $\$(\mathscr{H},\rankformulas)$.

\begin{theorem}[Inclusion Soundness] \label{thm:inclusion-soundness}
  Let $P$ be a parameterized program, and let $\tuple{\mathscr{H},\rankformulas}$ be a
  regular well-founded proof space.  If
  $\$(\lang(P)) \subseteq \$(\mathscr{H},\rankformulas)$, then $\lang(P) \subseteq
  \omega(\mathscr{H},\rankformulas)$.
\end{theorem}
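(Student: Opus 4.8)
The plan is to first reduce to ultimately periodic traces using Theorem~\ref{thm:up}, and then to extract an $\omega(\mathscr{H},\rankformulas)$-decomposition by an infinite Ramsey argument over the many lasso representations of a single periodic trace. By Theorem~\ref{thm:up} it suffices to show that every ultimately periodic program trace $\tau\cdot\rho^\omega \in \lang(P)$ lies in $\omega(\mathscr{H},\rankformulas)$. So I would fix such a trace and pick $N$ large enough that $\tau,\rho \in \Sigma(N)^*$ (possible, since an infinite execution involves only finitely many threads). The crucial observation is that $\tau\cdot\rho^\omega$ has infinitely many lasso representations: for every pair $0 \le a < b$ we have $\tau\rho^a\cdot(\rho^{b-a})^\omega = \tau\cdot\rho^\omega \in \lang(P)$, so the lasso $\tau\rho^a \$ \rho^{b-a}$ belongs to $\$(\lang(P))$ and hence, by the hypothesis, to $\$(\mathscr{H},\rankformulas)$. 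By Definition~\ref{def:lasso-lang}, each such membership supplies an assertion $\phi_{a,b}$ and a ranking formula $w_{a,b} \in \rankformulas$ with $\hoare{\true}{\tau\rho^a}{\phi_{a,b}} \in \mathscr{H}$ and $\hoare{\phi_{a,b} \land \bigwedge_{x\in\iVar{N}}\old{x}=x}{\rho^{b-a}}{w_{a,b}} \in \mathscr{H}$; I would fix one such witness for each pair.

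Next I would show that only finitely many ranking formulas can occur as the values $w_{a,b}$. Since $\tuple{\mathscr{H},\rankformulas}$ is regular, $\rankformulas$ is generated from a finite set $W$ by permutations of thread identifiers, and every triple in $\mathscr{H}$ is well-formed (by induction on its derivation from the well-formed basis, exactly as in the proof of Theorem~\ref{thm:up}). Well-formedness of $\hoare{\true}{\tau\rho^a}{\phi_{a,b}}$ forces $\phi_{a,b}$ to mention only indexed local variables of threads occurring in $\tau\rho^a$, i.e.\ threads in $\{1,\dotsc,N\}$; well-formedness of the second triple then forces $w_{a,b}$ to be a formula over $\iVar{N}$ and its $\old{\cdot}$-copies. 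Finally, a formula $w[\pi]$ with $w \in W$ is determined, over $\iVar{N}$, by the restriction of $\pi$ to the finitely many thread indices occurring in $w$ together with the requirement that these be mapped into $\{1,\dotsc,N\}$; as $W$ is finite, only finitely many such formulas exist. Hence the $w_{a,b}$ range over a finite palette.

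With finitely many colours in hand, I would apply the infinite Ramsey theorem to the edge-colouring $\{a,b\} \mapsto w_{a,b}$ of the complete graph on $\mathbb{N}$: there is an infinite set $A = \{a_0 < a_1 < a_2 < \dotsi\}$ and a single ranking formula $w^\ast \in \rankformulas$ with $w_{a_k,a_{k+1}} = w^\ast$ for all $k$. Setting $\alpha_k = |\tau| + a_k\,|\rho|$ yields a strictly increasing sequence with $\tau[1,\alpha_k] = \tau\rho^{a_k}$ and $\tau[\alpha_k+1,\alpha_{k+1}] = \rho^{a_{k+1}-a_k}$. For each $k$, the witnesses of the lasso $\tau\rho^{a_k}\$\rho^{a_{k+1}-a_k}$ provide $\phi = \phi_{a_k,a_{k+1}}$ with $\hoare{\true}{\tau[1,\alpha_k]}{\phi} \in \mathscr{H}$ and $\hoare{\phi \land \bigwedge_{x\in\iVar{N}}\old{x}=x}{\tau[\alpha_k+1,\alpha_{k+1}]}{w^\ast} \in \mathscr{H}$ — precisely the two conditions of Definition~\ref{def:omega-h} for the single ranking formula $w^\ast$. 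Since Definition~\ref{def:omega-h} permits $\phi$ to depend on $k$, no coherence between consecutive segments is required, so I conclude $\tau\cdot\rho^\omega \in \omega(\mathscr{H},\rankformulas)$.

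The step I expect to require the most care is the combination that makes Ramsey applicable: verifying that the ranking formulas $w_{a,b}$ drawn from the entire infinite family of lasso representations all lie in one finite set, which rests on regularity (finite generation up to permutation) together with well-formedness to pin each $w_{a,b}$ down to $\iVar{N}$. This is also where the hypothesis $\$(\lang(P)) \subseteq \$(\mathscr{H},\rankformulas)$ is used in full strength: it is not enough that \emph{some} representation of $\tau\cdot\rho^\omega$ be captured, but rather that \emph{every} representation $\tau\rho^a\$\rho^{b-a}$ be; it is exactly the uniform availability of a ranking triple across all these representations that the Ramsey argument converts into the single well-founded order demanded by $\omega(\mathscr{H},\rankformulas)$, thereby bridging the gap between $\$(\mathscr{H},\rankformulas)$ and $\$(\omega(\mathscr{H},\rankformulas))$ noted above.
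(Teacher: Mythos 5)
Your proposal is correct and follows essentially the same route as the paper's own proof: reduce to ultimately periodic traces via Theorem~\ref{thm:up}, extract witnesses $(\phi_{n,k},\rankformula_{n,k})$ from the membership of every lasso representation $\tau\rho^n\$\rho^k$ in $\$(\mathscr{H},\rankformulas)$, apply Ramsey's theorem to the finitely many ranking-formula colours, and define $\alpha_k = |\tau| + d_k\cdot|\rho|$ over the monochromatic set to meet Definition~\ref{def:omega-h}. The only difference is that you spell out the finiteness of the colour palette (via regularity, well-formedness, and permutations acting on $\iVar{N}$), which the paper delegates to ``the same reasoning as in the proof of Theorem~\ref{thm:up}''.
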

\begin{proof}
  Suppose that the inclusion $\$(\lang(P)) \subseteq \$(\mathscr{H},\rankformulas)$ holds.
  By Theorem~\ref{thm:up}, it is sufficient to prove that every ultimately
  periodic trace of $\lang(P)$ is in $\omega(\mathscr{H},\rankformulas)$.  So let
  $\tau\cdot\rho^\omega \in \lang(P)$, and we will prove that $\tau\cdot\rho^\omega \in
  \omega(\mathscr{H},\rankformulas))$.

  Since $\tau\rho^\omega \in \lang(P)$, we must have $\tau\rho^n\$\rho^k \in
  \$(\lang(P)) \subseteq \$(\mathscr{H},\rankformulas)$ for all naturals $n$ and positive
  naturals $k$.  From the membership of $\tau\rho^n\$\rho^k$ in
  $\$(\mathscr{H},\rankformulas)$ and the definition of $\$(\mathscr{H},\rankformulas)$, there must
  exist some program assertion $\phi_{n,k}$ and some ranking formula
  $\rankformula_{n,k} \in \mathscr{W}$ such that:
  \[\hoare{\true}{\tau\rho^n}{\phi_{n,k}} \in \mathscr{H}\text{, and}\]
  \[\hoare{\phi_{n,k} \land \bigwedge_{x \in \iVar{N}}\hspace*{-5pt} \old{x} = x}{\rho^k}{\rankformula_{n,k}} \in \mathscr{H}\]

  Define an equivalence relation $\sim$ on the set of pairs $(n,m) \in
  \mathbb{N}^2$ such that $n<m$ by defining $(n,m) \sim (n',m')$ iff the
  ranking formulas $\rankformula_{n,m-n}$ and
  $\rankformula_{n',m'-n'}$ are equal.  Since the set of ranking
  formulas $\{ \rankformula_{n,k} \in \mathscr{W}: n,k \in \mathbb{N} \land k \geq 1
  \}$ is finite (following the same reasoning as in the proof of
  Theorem~\ref{thm:up}), the equivalence relation $\sim$ has finite index.  We
  use $[\rankformula]$ to denote the equivalence class consisting of all
  $(n,m)$ such that $\rankformula_{n,m-n} = \rankformula$.  By
  Ramsey's theorem \cite{Ramsey1930}, there is some ranking formula
  $\rankformula$ and some infinite set of naturals $D \subseteq \mathbb{N}$
  such that for all $d,d' \in D$ with $d<d'$, we have $(d,d') \in
  [\rankformula]$.

  We conclude that $\tau\rho^\omega \in \omega(\mathscr{H},\rankformulas)$ by observing
  (c.f. Definition~\ref{def:omega-h}) that there is an infinite sequence of
  naturals $\{\alpha_i\}_{i \in \mathbb{N}}$ defined by
  \[\alpha_i = |\tau| + d_i\cdot|\rho|\]
  (where $d_i$ is the $i^{\textit{th}}$ smallest element of $D$) such that the following hold:
  \begin{enumerate}
  \item[i)] For any $i \in \mathbb{N}$, since (by definition) $d_i<d_{i+1}$,
    we have
    \[ \alpha_i = |\tau|+d_i\cdot|\rho| < |\tau|+d_{i+1}\cdot|\rho| = \alpha_{i+1} \]
  \item[ii)] Let $i \in \mathbb{N}$, and define
    \begin{align*}
      n &= (\alpha_i - |\tau|)/|\rho|\\
      k &= (\alpha_{i+1} - \alpha_i)/|\rho|\ .
    \end{align*}
    Observe that:
    \begin{align*}
      \tau\rho^\omega[1,\alpha_i] &= \tau \cdot \rho^n\\
      \tau\rho^\omega[\alpha_i+1,\alpha_{i+1}] &= \rho^k\ .
    \end{align*}
    Recalling that
    $\rankformula = \rankformula_{n,k}$, it holds that

  \[\hoare{\true}{\tau\rho^n}{\phi_{n,k}} \in \mathscr{H}\]
  \[\hoare{\phi_{n,k} \land \bigwedge_{x \in \iVar{N}} \old{x} = x}{\rho^k}{\rankformula_{n,k}} \in \mathscr{H} \qedhere\]
  \end{enumerate}
\end{proof}

\begin{remark}
  We note that the reverse of the Inclusion Soundness theorem does not hold:
  if $\lang(P) \subseteq \omega(\mathscr{H},\rankformulas)$, it is not necessarily the
  case that $\$(\lang(P)) \subseteq \$(\mathscr{H},\rankformulas)$.
\end{remark}


\subsection{Quantified Predicate Automata}

The previous section establishes that a sufficient condition for verifying the
premise $\lang(P) \subseteq \omega(\mathscr{H},\rankformulas)$ of our proof
rule (an inclusion problem for sets of \itrace{}s) is to verify the inclusion
$\$(\lang(P)) \subseteq \$(\mathscr{H},\rankformulas)$ (an inclusion problem
for sets of \ftrace{}s).  In this section, we define \emph{quantified
  predicate automata}, a class of automata that are capable of recognizing the
difference $\$(\lang(P)) \setminus \$(\mathscr{H},\rankformulas)$.  This
allows us to characterize the problem of checking the inclusion $\$(\lang(P))
\subseteq \$(\mathscr{H},\rankformulas)$ as a safety problem: non-reachability
of an accepting configuration in a quantified predicate automaton (that is, the
\emph{emptiness} problem).

Quantified predicate automata (QPA) are infinite-state and recognize \ftrace{}s.  QPAs extend predicate automata
(\cite{Farzan2015}) with quantification, enabling them to recognize the
lasso language $\$(\lang(P))$.  Predicate automata are themselves
an infinite-state generalization of alternating finite automata
\cite{Brzozowski1980,Chandra1981}.  Our presentation of QPA will follow the
presentation of predicate automata from \cite{Farzan2015}.

Fix an enumeration $\{i_0,i_1,...\}$ of variable symbols.  Every
quantified predicate automaton is equipped with a \emph{finite relational
  vocabulary} $\tuple{Q,\ar}$, consisting of a finite set of predicate symbols
$Q$ and a function $\ar : Q \rightarrow \mathbb{N}$ that maps each predicate
symbol to its arity.  We use $\formulae(Q,\ar)$ to denote the set of positive
first-order formulas over the vocabulary $\tuple{Q,\ar}$, defined as follows:
\begin{align*}
  \phi,\psi \in \formulae(Q,\ar) ::= &\;\;\; q(i_{j_1},...,i_{j_{\ar(q)}}) \mid i_{j} = i_k \mid i_j \neq i_k\\
  & \mid \phi \land \psi \mid \phi \lor \psi \mid \forall i_j. \phi \mid \exists i_j. \phi
\end{align*}
Quantified predicate automata are defined as follows:
\begin{definition}[Quantified predicate automata]
  A \emph{quantified predicate automaton} (QPA) is a 6-tuple $A =
  \tuple{Q,\ar,\Sigma,\delta,\phi_\start,F}$ where
  \begin{itemize}
  \item $\tuple{Q,\ar}$ is a finite relational vocabulary,
  \item $\Sigma$ is a finite alphabet,
  \item $\phi_{\start} \in \formulae(Q,\ar)$ is a sentence over the vocabulary
    $\tuple{Q,\ar}$,
  \item $F \subseteq Q$ is a set of accepting predicate symbols, and
  \item $\delta : Q \times \Sigma \rightarrow \formulae(Q,\ar)$ is a
    transition function that satisfies the property that for any $q \in Q$ and
    $\sigma \in \Sigma$, the free variables of the formula $\delta(q,\sigma)$
    belong to the set $\{i_0,...,i_{\ar(q)}\}$.  The transition function
    $\delta$ can be seen as a symbolic rewrite rule
    \[ q(i_1,...,i_{\ar(q)}) \xrightarrow{\ic{\sigma}{i_0}} \delta(q,\sigma)\ , \]
    so the free variable restriction enforces that all variables on the
    right-hand-side are bound on the left-hand-side. \qedhere
  \end{itemize}
\end{definition}

A QPA $A = \tuple{Q,\ar,\Sigma,\delta,\phi_\start,F}$ defines an
infinite-state non-deterministic transition system, with transitions labeled
by indexed commands.  The configurations of the transition system are the set
of finite structures over the vocabulary $\tuple{Q,\ar}$.  That is, a
configuration $\config$ of $A$ consists of a finite universe $U^\config
\subseteq_{\text{fin}} \mathbb{N}$ (where $U^\config$ should be interpreted as
a set of thread identifiers) along with an interpretation $q^\config \subseteq
(U^\config)^{\ar(q)}$ for each predicate symbol $q \in Q$.  A configuration
$\config$ is \emph{initial} $\config \models \phi_\start$, and
\emph{accepting} if for all $q \notin F$, $q^\config = \emptyset$.  Given
$A$-configurations $\config$ and $\config'$, $\sigma \in \Sigma$, and $\idx{k}
\in U^\config$, $\config$ transitions to $\config'$ on reading
$\ic{\sigma}{\idx{k}}$, written $\ctrans{\config}{\sigma}{\idx{k}}{\config'}$,
if $\config$ and $\config'$ have the same universe ($U^\config =
U^{\config'}$), and for all predicate symbols $q \in Q$ and all
$\tuple{\idx{i}_1,...,\idx{i}_{\ar(q)}} \in q^\config$, we have
\[ \config' \models \delta(q,\sigma)[i_0 \mapsto \idx{k}, i_1 \mapsto \idx{i}_1,..., i_{\ar(q)} \mapsto \idx{i}_{\ar(q)}]\ . \]
For a concrete example of a transition, suppose that \[\delta(p,a) =
p(i_1,i_2) \lor (i_0 \neq i_1 \land q(i_2))\ .\] To make variable binding more
explicit, we will write this rule in the form \[\delta(p(i,j),
\tuple{a:k}) = p(i,j) \lor (k \neq i \land q(j))\ .\] For example,
if $\config$ is a configuration with $\config \models p(3,4)$, then a
transition $\ctrans{\config}{a}{1}{\config'}$ is possible only when ${\config' \models p(3,4) \lor (1 \neq 3 \land q(4))}$.


QPAs read input traces from right to left.  A trace \[\tau =
\ic{\sigma_1}{\idx{i}_1} \dotsi \ic{\sigma_n}{\idx{i}_n}\] is accepted by $A$
if there is a sequence of configurations $\config_{n},...,\config_{0}$ such
that $\config_{n}$ is initial, $\config_0$ is accepting, and for each $r \in
\{1,..., n\}$, we have
$\ctrans{\config_{r}}{\sigma_{r}}{\idx{i}_{r}}{\config_{r-1}}$.  We define
$\lang(A)$ to be the set all traces that are accepted by $A$.


Recall that the goal stated at the beginning of this section was to develop a
class of automaton capable of recognizing the difference $\$(\lang(P))
\setminus \$(\mathscr{H},\rankformulas)$ (for any given parameterized program $P$ and
regular well-founded proof space $\tuple{\mathscr{H},\rankformulas}$), and
thereby arrive at a sufficient automata-theoretic condition for checking the
premise of the proof rule established in Section~\ref{sec:proofspace}.  The following
theorem states that quantified predicate automata achieve this goal.
\begin{theorem} \label{thm:qpa_rec}
  Let $P$ be a parameterized program and a let $\tuple{\mathscr{H},\rankformulas}$ be a
  regular well-founded proof space.  Then
  there is a QPA that accepts $\$(\lang(P)) \setminus \$(\mathscr{H},\rankformulas)$.
\end{theorem}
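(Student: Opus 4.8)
The plan is to exhibit the target language as a Boolean combination of two QPA-recognizable languages and then appeal to closure of QPA under intersection and complement. That closure follows, as for ordinary alternating automata, by taking products for intersection and by dualizing the transition formulas for complement, swapping $\land/\lor$ and $\forall/\exists$ and complementing the accepting predicates. Concretely, I would build one QPA $A_P$ recognizing $\$(\lang(P))$, a second QPA $A_{\mathscr{H}}$ recognizing $\$(\mathscr{H},\rankformulas)$, and set the result to be $A_P \cap \overline{A_{\mathscr{H}}}$.

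Constructing $A_P$ is the easier half. A lasso $\tau\$\rho$ lies in $\$(\lang(P))$ exactly when $\tau\rho^\omega$ is an infinite run of some $P(N)$; since every state of $P(N)$ is accepting, this amounts to the purely local conditions that each thread's projection of $\tau$ follows the control-flow edges of $P$ from $\ell_{\init}$, and that $\rho$ returns every thread to the location it occupied at the $\$$ marker (so that the loop may be iterated). I would use one unary predicate per program location to record each thread's control state, a nullary predicate to mark whether the $\$$ has been crossed, and a universal quantifier over the universe to insist that every thread identifier that appears respects these constraints; reading right-to-left, the automaton records each thread's location at the $\$$ boundary and checks that it agrees with the location reached after processing $\rho$.

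The substantial work is the construction of $A_{\mathscr{H}}$, which extends the predicate-automaton encoding of proof spaces from \cite{Farzan2015} to lassos and ranking formulas. Because $\tuple{\mathscr{H},\rankformulas}$ is regular, it is generated by a finite basis of basic Hoare triples together with a finite set of ranking formulas, and (as in the proof of Theorem~\ref{thm:up}) only finitely many atomic assertions are relevant; these become the predicate symbols of the automaton, indexed by the threads they mention, so that \textsc{Symmetry} is realized by the thread-parametric reading of predicates and \textsc{Conjunction} by the conjunctive (alternating) acceptance. Reading the lasso right-to-left, the automaton existentially guesses the ranking formula $\rankformula$ and the intermediate assertion $\phi$ from their finite ranges, verifies the loop condition of Definition~\ref{def:lasso-lang} over the suffix $\rho$ by propagating the conjuncts of $\rankformula$ backward through the single-command transitions supplied by basic triples, switches phase at the $\$$, and then verifies the stem condition over $\tau$, terminating in the postcondition $\true$. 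The universally-indexed conjunct $\bigwedge_{x\in\iVar{N}}\old{x}=x$ is enforced by a universal quantifier over the universe; quantification is essential here precisely because this side condition ranges over \emph{all} active threads.

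I expect the main obstacle to be the complementation step, $\overline{A_{\mathscr{H}}}$. Dualizing a positive transition formula sends an atom $q(\bar\imath)$ to $\neg q(\bar\imath)$, which is not itself positive, so I would either enlarge the vocabulary with a complementary predicate $\bar{q}$ for each $q$ and maintain the invariant that $\bar{q}$ is interpreted as the complement of $q$, or---more robustly---construct a QPA for the complement of $\$(\mathscr{H},\rankformulas)$ directly, reading off from Definition~\ref{def:lasso-lang} that non-membership means that for \emph{every} choice of $\phi$ and $\rankformula$ at least one of the two derivations fails, a statement whose universal structure over the finite sets of assertions and ranking formulas is directly expressible in the $\forall$/$\land$ fragment. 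A secondary subtlety I would check carefully is the free-variable discipline on transition formulas (each $\delta(q,\sigma)$ may mention only $i_0,\dots,i_{\ar(q)}$): the backward propagation of a multi-thread conjunct must keep every referenced thread index bound by the predicate being rewritten, which is exactly what well-formedness of the basis guarantees.
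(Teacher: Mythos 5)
Your proposal follows the paper's proof exactly: the same three-step decomposition (a QPA $\mathcal{A}(P)$ recognizing $\$(\lang(P))$, a QPA $\mathcal{A}(H,W)$ recognizing $\$(\mathscr{H},\rankformulas)$ built from the finite basis of basic Hoare triples and ranking formulas, and closure of QPA languages under Boolean operations), with complementation handled as in the paper by introducing a negated copy $\bar{q}$ of each predicate and De Morgan--dualizing the transition formulas and initial condition. The only slip is in your vocabulary for the program automaton: single unary location predicates cannot simultaneously track a thread's current location while reading $\rho$ right-to-left \emph{and} the location it must return to, so---as your own stated invariant (``records each thread's location at the $\$$ boundary and checks that it agrees'') already forces---you need predicates indexed by \emph{pairs} of locations $\tuple{\ell_1,\ell_2}$ during the loop phase, which is precisely what the paper's construction does.
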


The proof of this theorem proceeds in three steps: (I) $\$(\lang(P))$ is
recognizable by a QPA (Proposition~\ref{prop:pa-program}), (II)
$\$(\mathscr{H},\rankformulas)$ is recognizable by a QPA
(Proposition~\ref{prop:pa-program}), and (III) QPAs are closed under Boolean
operations (Proposition~\ref{prop:qpa-closure}).  Starting with step (I), we need the following proposition.
\begin{restatable}{proposition}{proppaprogram}
  \label{prop:pa-program}
  Let $P$ be a parameterized program.  Then there is a QPA $\mathcal{A}(P)$
  that accepts $\$(\lang(P))$.
\end{restatable}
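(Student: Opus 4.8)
The plan is to reduce membership in $\$(\lang(P))$ to a purely combinatorial reachability condition on the control-flow graph and then implement a check of that condition as a QPA that reads the lasso from right to left. First I would reformulate the target language. Since every state of the B\"{u}chi automaton $P(N)$ is accepting, $\tau\cdot\rho^\omega \in \lang(P(N))$ holds exactly when the trace admits an infinite control-flow run, and a short argument shows this is equivalent to the existence of a state $q_1 : \{1,\dots,N\}\to\Loc$ with $(\lambda i.\,\ell_\init) \xrightarrow{\tau} q_1 \xrightarrow{\rho} q_1$: the loop $\rho$ must return every participating thread to the location it occupied at the loop head, because each thread's first command in $\rho$ has a fixed source that must match that thread's location at the start of every iteration. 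Thus $\tau\$\rho \in \$(\lang(P))$ iff there is a finite set of threads and a per-thread location assignment $q_1$ such that $\rho$ read backward from $q_1$ returns to $q_1$, $\tau$ read backward from $q_1$ reaches $\ell_\init$ for every thread, and all traversed commands respect $\src/\tgt$. Crucially, the number of threads is arbitrary but finite, which matches the QPA's arbitrary finite universe, so a single QPA will recognize the union over all $N$.

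Next I would build the automaton $\mathcal{A}(P)$ with two families of unary predicates: $\mathsf{cur}_\ell$ for each $\ell\in\Loc$, recording the location each thread occupies at the current cut of the right-to-left scan, and $\mathsf{end}_\ell$, which freezes the location each thread occupies at the right end of $\rho$. The essential use of quantification is in the start sentence
\[ \phi_\start \;=\; \forall x.\,\bigvee_{\ell\in\Loc}\big(\mathsf{cur}_\ell(x)\wedge\mathsf{end}_\ell(x)\big), \]
which seeds every thread of the universe with a guessed end-of-$\rho$ location, simultaneously in $\mathsf{cur}$ and $\mathsf{end}$; this is exactly what ordinary (quantifier-free) predicate automata cannot do, and it is why QPA are needed. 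On a program command $\sigma$ the transition updates $\mathsf{cur}$ by a single backward control-flow step for the executing thread $i_0$ --- rewriting $\mathsf{cur}_{\tgt(\sigma)}(i_1)$ to $\mathsf{cur}_{\src(\sigma)}(i_1)$ when $i_0=i_1$, and leaving $\mathsf{cur}_\ell(i_1)$ unchanged when $i_0\neq i_1$ --- while leaving every $\mathsf{end}_\ell$ obligation untouched; a thread carrying a $\mathsf{cur}_\ell$ obligation with $\ell\neq\tgt(\sigma)$ that executes $\sigma$ produces an unsatisfiable constraint and kills the branch. On the separator $\$$ the transition keeps every $\mathsf{cur}_\ell$ obligation, additionally turns each $\mathsf{end}_\ell(i_1)$ obligation into a $\mathsf{cur}_\ell(i_1)$ obligation, and discards $\mathsf{end}$; the resulting merge forces $q_1$ (the value of $\mathsf{cur}$ after scanning $\rho$) to agree with the frozen end-of-$\rho$ location recorded in $\mathsf{end}$, i.e.\ it checks the loop closure $\rho(q_1)=q_1$. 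Finally I take $F$ to contain only $\mathsf{cur}_{\ell_\init}$: making every $\mathsf{cur}_\ell$ with $\ell\neq\ell_\init$ and every $\mathsf{end}_\ell$ non-accepting forces the leftmost configuration to place every thread at $\ell_\init$ and to contain no stale obligation (in particular a $\$$ must have been read).

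Correctness then splits into the two usual directions. For completeness, given $\tau\rho^\omega\in\lang(P(N))$ I take the universe to be the participating threads, seed both $\mathsf{cur}$ and $\mathsf{end}$ with $q_1$ (the end-of-$\rho$ state equals $q_1$ because the loop closes), and check that the deterministic backward bookkeeping validates every transition and reaches the accepting configuration. For soundness, from an accepting run I read off a consistent location assignment: the command transitions certify that each thread follows a genuine control-flow path, the merge at $\$$ certifies that the loop closes, and acceptance certifies the $\ell_\init$ start, so $\tau\rho^\omega$ is a program trace. I expect the main obstacle to be handling the positivity restriction on $\formulae(Q,\ar)$: because no negation is available, I cannot directly assert ``each thread sits at a unique location'' or encode the biconditional $\mathsf{cur}_\ell\Leftrightarrow\mathsf{end}_\ell$ at the separator. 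The construction therefore enforces these properties only indirectly, letting an inconsistent pair of obligations $\mathsf{cur}_\ell,\mathsf{cur}_{\ell'}$ on one thread be refuted downstream --- either by a later command transition (which pins the target location) or by the accepting condition (which forbids any non-$\ell_\init$ $\mathsf{cur}$). The delicate part of the write-up is to prove that this indirect enforcement is exactly sound and complete, including the corner cases of nonempty $\rho$, a unique $\$$, and universe threads that never execute.
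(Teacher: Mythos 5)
Your core construction is essentially the paper's: a right-to-left scan, a universally quantified initial sentence that guesses each thread's location at the right end of $\rho$, backward simulation of control flow on commands, a loop-closure check when $\$$ is read, and acceptance only at $\ell_\init$. The cosmetic difference is that the paper packages your two unary families into a single family of monadic predicates $\tuple{\ell_1,\ell_2}(i)$ (current location paired with loop-start location), so the closure check at $\$$ is atomic within one predicate and no cross-atom consistency argument is needed; your argument that positivity-induced inconsistencies (two distinct $\mathsf{cur}$ atoms on one thread) are always refuted downstream, either by a later command of that thread or by the acceptance condition, is nevertheless correct.

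There is, however, a genuine gap: as described, your automaton over-accepts words that are not of lasso form, and fixing this requires changing the construction, not merely writing the proof more carefully. Concretely: (1)~after your $\$$-transition has merged $\mathsf{end}$ into $\mathsf{cur}$ and discarded $\mathsf{end}$, every surviving obligation is a $\mathsf{cur}_\ell$ atom, and your $\$$-transition preserves such atoms; hence a \emph{second} $\$$ is a no-op, and a word such as $\tau_1\$\tau_2\$\rho$ (where $\tau_1\tau_2$ is a valid stem and $\rho$ a valid closing loop) is accepted even though every word of $\$(\lang(P))$ contains exactly one $\$$. (2)~A word $\tau\$$ ending in $\$$ is accepted: the merge fires immediately on the first letter read and the closure condition becomes vacuous, yet $\tau\cdot\epsilon^\omega$ is a finite word, so $\tau\$ \notin \$(\lang(P))$. (3)~With an empty universe the empty word is accepted. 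The paper closes exactly these holes with two devices your construction lacks: a \emph{nullary} predicate $\overline{\$}$, required by the initial formula, non-accepting, with $\delta(\overline{\$},\ic{\$}{j}) = \false$ and $\delta(\overline{\$},\ic{\sigma}{j}) = \true$ (this rejects both the empty word and any word whose last letter is $\$$, enforcing $\rho \neq \epsilon$); and a phase switch at $\$$ into a distinct family of stem predicates $\ell(i)$ whose own $\$$-transition is $\false$ (this rejects any word with more than one $\$$). You do flag ``nonempty $\rho$'' and ``a unique $\$$'' as corner cases for the write-up, but for the automaton you actually defined these properties are false, not merely delicate to prove.
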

\begin{proof}
  Let $P = \tuple{\Loc,\Sigma,\ell_\init,\src,\tgt}$ be a parameterized
  program.  For a word $\tau \in \iSigma{N}^*$ and a thread $\idx{i}$, define $\tau|_{\idx{i}}$ to be a the sub-sequence of $\tau$
  consisting of the commands executed by thread $\idx{i}$.  A word
  $\tau\$\rho$ is a lasso of $P$ if for each thread $\idx{i}$, (1)
  $\tau|_{\idx{i}}$ corresponds to a path in $P$, and (2) $\rho|_{\idx{i}}$
  corresponds to a loop in $P$.  We construct the QPA $\mathcal{A}(P) =
  \tuple{Q,\ar,\Sigma,\delta,\phi_\start,F}$ as follows:
  \begin{itemize}
  \item $Q = \Loc \cup (\Loc \times \Loc) \cup \{ \overline{\$} \}$, where
    $\$$ is a nullary predicate symbol and the rest are monadic.  The
    intuitive interpretation of propositions over this vocabulary are as
    follows:
    \begin{itemize}
    \item A trace is accepted starting from a configuration $\config$ with $\config \models \overline{\$}$ if the next letter
      is not $\$$.  This predicate is used to enforce the condition that the
      loop of a lasso is not empty.
    \item For each $\ell \in \Loc$ and each thread $\idx{i}$, a trace $\tau$
      is accepted starting from a configuration $\config$ with $\config \models \ell(\idx{i})$ if $\tau|_{\idx{i}}$
      corresponds to a path in $P$ ending at $\ell$.
    \item For each $\ell_1,\ell_2 \in \Loc$ and each thread $\idx{i}$, a trace
      $\tau\$\rho$ is accepted starting from a configuration $\config$ with $\config \models \tuple{\ell_1,\ell_2}(\idx{i})$
      if $\tau|_{\idx{i}}$ corresponds to a path in $P$ ending at $\ell_1$ and
      $\rho|_{\idx{i}}$ corresponds to a path in $P$ from $\ell_1$ to
      $\ell_2$.
    \end{itemize}
  \item $\phi_\start = \overline{\$} \land \forall i. \bigvee_{\ell \in \Loc}
    \tuple{\ell,\ell}(i)$
  \item $F = \{ \ell_\init \}$ (the automaton accepts when every thread
    returns to the initial location)
  \end{itemize}
  The transition function $\delta$ is defined as follows.

  For any location $\ell_1$, command $\sigma$, and thread $\idx{i}$, if
  $\tgt(\sigma) = \ell_1$ and $\idx{i}$ is at $\ell_1$, then reading
  $\ic{\sigma}{\idx{i}}$ causes thread $\idx{i}$ to move from $\ell_1$ to
  $\src(\sigma)$ while other threads stay put:
  \begin{align*}
    \delta(\tuple{\ell_1,\ell_2}(i), \ic{\sigma}{j}) &=
    (i = j \land \tuple{\tgt(\sigma),\ell_2}(i))\\
    &\hspace*{10pt}\lor (i \neq j \land \tuple{\ell_1,\ell_2}(i))\\
    \delta(\ell_1(i), \ic{\sigma}{j}) &= (i = j \land \tgt(\sigma)(i)) \lor (i \neq j \land \ell_1(i))
  \end{align*}
  For any location $\ell_1$, command $\sigma$, and thread $\idx{i}$, if thread
  $\idx{i}$ is at $\ell_1$ and $\ell_1 \neq \tgt(\sigma)$, then the automaton
  rejects when it reads $\ic{\sigma}{\idx{i}}$, but stays put when executing
  the command of another thread:
  \begin{align*}
    \delta(\tuple{\ell_1,\ell_2}(i), \ic{\sigma}{j}) &= (i \neq j \land \tuple{\ell_1,\ell_2}(i))\\
    \delta(\ell_1(i), \ic{\sigma}{j}) &= (i \neq j \land \ell_1(i))\ .
  \end{align*}
  Upon reading $\$$, the automaton transitions from $\tuple{\ell_1,\ell_1}(i)$
  to $\ell_1(i)$:
  \[
  \delta(\tuple{\ell_1,\ell_1}(i), \ic{\$}{j}) = \ell_1(i)\ ;
  \]
  but for $\ell_1 \neq \ell_2$, the automaton rejects:
  \[
  \delta(\tuple{\ell_1,\ell_2}(i), \ic{\$}{j}) = \false\ .
  \]
  Finally, the nullary predicate $\overline{\$}$ ensures that the next
  letter in the word is not $\$$:
  \begin{align*}
    \delta(\overline{\$}, \ic{\$}{j}) &= \false\\
    \delta(\overline{\$}, \ic{\sigma}{j}) &= \true\ . \qedhere
  \end{align*}

\end{proof}

Moving on to step (II):
\begin{restatable}{proposition}{proppaproof}
\label{prop:pa-proof}
   Let $\tuple{\mathscr{H},\rankformulas}$ be a regular well-founded proof space with basis $\tuple{H,W}$.  Then there is a QPA
   $\mathcal{A}(H,W)$ that accepts $\$(\mathscr{H},\rankformulas)$.
\end{restatable}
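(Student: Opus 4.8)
The plan is to build $\mathcal{A}(H,W)$ by adapting the predicate-automaton-from-proof-space translation of~\cite{Farzan2015} to the two-phase structure of a lasso, reading the word $\tau\$\rho$ from right to left. The relational vocabulary of $\mathcal{A}(H,W)$ will contain one predicate symbol for each \emph{atomic} program assertion that occurs, up to a renaming of thread identifiers, either as a postcondition or as a conjunct of a precondition of a triple in the finite basic basis $H$ (together with the atomic conjuncts of the finitely many ranking formulas in $W$). The arity of each symbol is the number of distinct thread identifiers its assertion mentions, and the interpretation of the symbol in a configuration records, for each tuple of threads, that the corresponding instance of the assertion is an \emph{obligation} the proof must discharge at the current position. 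The crucial preliminary observation—which rests on regularity—is that this vocabulary is finite: because $H$ is finite and every triple in it is \emph{basic}, postconditions are atomic and mention boundedly many threads, and because every triple in $\mathscr{H}$ is well-formed (proved by induction on its derivation, exactly as in the proof of Theorem~\ref{thm:up}), no derivation introduces assertions over unboundedly many threads. Hence only finitely many atomic assertions are relevant up to symmetry, giving a genuine finite relational vocabulary for a QPA.

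Next I would define $\delta$, $\phi_\start$, and $F$ so as to track, backwards, a decomposition of the two witnessing Hoare triples of Definition~\ref{def:lasso-lang} into single-command basis triples. The run begins at the right end of $\rho$ in a \emph{loop phase}: the sentence $\phi_\start$ nondeterministically commits to one ranking formula $\rankformula\in\rankformulas$ (finitely many up to symmetry) by seeding the configuration with exactly the atomic conjuncts of $\rankformula$ as loop obligations, and forbids the first letter read from being $\$$ so that $\rho$ is nonempty. For a loop obligation $q_\psi$ and a command $\sigma$, the transition $\delta(q_\psi,\ic{\sigma}{j})$ is the disjunction, over all basis triples $\hoare{\text{pre}}{\ic{\sigma'}{\idx{i}'}}{\psi'}$ and permutations $\pi$ with $\psi'[\pi]=\psi$ and $\pi(\idx{i}')=j$, of the conjunction of the predicates corresponding to the conjuncts of $\text{pre}[\pi]$; an obligation whose tuple does not involve the acting thread $j$ is carried unchanged. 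This is precisely reverse-\textsc{Sequencing}: the conjunctive and disjunctive structure of $\formulae(Q,\ar)$ captures \textsc{Conjunction} (coexisting obligations) and the choice of basis triple, while \textsc{Symmetry} is folded into the quantification over $\pi$. Old variables cause no difficulty inside the loop phase—they are part of the atomic assertions and propagate like any other.

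At the unique $\$$-transition I would switch to a \emph{stem phase}. Each surviving loop obligation is inspected: an obligation of the form $\old{x}=x$ is discharged to $\true$, since it is supplied by the conjunct $\bigwedge_{x\in\iVar{N}}\old{x}=x$ of the loop precondition, whose availability for \emph{all} threads of the configuration's universe is expressed with a universal quantifier; an obligation free of $\old{}$ variables is converted into the corresponding stem obligation (the collection of these old-free atoms is exactly the intermediate assertion $\phi$, which therefore need not be guessed explicitly); and any remaining obligation mixing current and old variables nontrivially is mapped to $\false$, since it could not be a conjunct of $\phi\land\bigwedge_{x}\old{x}=x$. The stem phase then applies the same reverse-\textsc{Sequencing} transitions to the stem obligations while reading $\tau$ backwards, and the accepting condition requires every non-accepting predicate to be empty at the left end—i.e., that all stem obligations have been discharged to the trivial precondition $\true$.

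Finally I would prove $\lang(\mathcal{A}(H,W))=\$(\mathscr{H},\rankformulas)$ in both directions. For soundness, an accepting run read from the right yields, via the seeded ranking formula and the propagated obligations, derivations (using \textsc{Sequencing}, \textsc{Symmetry}, and \textsc{Conjunction}) of $\hoare{\phi\land\bigwedge_x\old{x}=x}{\rho}{\rankformula}$ and $\hoare{\true}{\tau}{\phi}$ in $\mathscr{H}$, witnessing membership. For completeness, given such derivations for a lasso in $\$(\mathscr{H},\rankformulas)$, I decompose them into single-command basis triples and read off the sequence of configurations, verifying that each step satisfies the corresponding disjunct of $\delta$ and that the boundary and acceptance conditions hold. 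I expect the main obstacle to be exactly the faithful encoding of the three proof-space inference rules together with the $\old{}$-variable and ranking-formula bookkeeping at the $\$$ boundary: one must show that reverse-\textsc{Sequencing} composed with the automaton's conjunctive branching generates \emph{all and only} the triples derivable in $\mathscr{H}$, and that the hand-off at $\$$ correctly realizes the existential over $\phi$ and the frozen-old-state precondition uniformly for every number of threads $N$. Establishing the finiteness of the vocabulary via the well-formedness induction is the key enabling lemma on which everything else rests.
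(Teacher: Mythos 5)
Your overall architecture matches the paper's: the predicate symbols are (canonical names of) the atomic assertions of the finite basic basis, transitions read the lasso right-to-left and replace a postcondition obligation by the precondition of a matching basis triple (with a disjunction over matching triples and permutations), the $\$$ letter discharges the $\old{x}=x$ obligations, the initial sentence commits to a ranking formula from $W$, and acceptance requires all obligations to be discharged to $\true$. However, one clause of your transition relation is a genuine error: ``an obligation whose tuple does not involve the acting thread $j$ is carried unchanged.'' Well-founded proof spaces have no such frame rule: non-interference must itself be witnessed by Hoare triples (this is exactly why Section~\ref{sec:proofspace} lists triples such as $\hoare{\texttt{d}(1)>0}{\ic{\texttt{x=pos()}}{2}}{\texttt{d}(1)>0}$ among those the algorithm must discover). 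In the paper's construction the corresponding ``stay put'' disjunct, e.g.\ $k \neq i \land k \neq j \land [\texttt{m}(1)<\texttt{m}(2)](i,j)$, is generated only because a basis triple such as $\hoare{\texttt{m}(2)<\texttt{m}(1)}{\ic{\texttt{m=t++}}{3}}{\texttt{m}(2)<\texttt{m}(1)}$ exists; when no such triple exists, the transition is the empty disjunction, i.e.\ $\false$.

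With your automatic carry, $\lang(\mathcal{A}(H,W))$ is in general a \emph{strict superset} of $\$(\mathscr{H},\rankformulas)$: an obligation like $[\texttt{m}(1)<\texttt{t}]$, which mentions the global variable $\texttt{t}$, would be carried unchanged across a command $\ic{\texttt{t=0}}{2}$ of another thread, even though the triple $\hoare{\texttt{m}(1)<\texttt{t}}{\ic{\texttt{t=0}}{2}}{\texttt{m}(1)<\texttt{t}}$ is invalid and hence not in $\mathscr{H}$. So the ``only'' half of your claimed equality fails, and the failure is not benign: the QPA is used to recognize $\$(\lang(P)) \setminus \$(\mathscr{H},\rankformulas)$, and an over-approximation of $\$(\mathscr{H},\rankformulas)$ under-approximates that difference, so emptiness of the constructed automaton would no longer imply $\$(\lang(P)) \subseteq \$(\mathscr{H},\rankformulas)$; the resulting termination ``proofs'' would be unsound. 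The fix is to delete the carry rule and let \emph{every} transition, including preservation under other threads' commands, arise only from basis triples. Two smaller discrepancies: finiteness of the vocabulary needs no well-formedness induction (it is immediate from finiteness of the basic basis $\tuple{H,W}$, since only assertions occurring there are used); and at the $\$$ boundary the paper simply passes every obligation other than $\old{g}=g$ and $\old{\iv{l}{1}}=\iv{l}{1}$ through unchanged, rather than mapping mixed old/current obligations to $\false$.
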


The construction is similar to the construction of a predicate automaton from a proof space \cite{Farzan2015}.  Intuitively, each Hoare triple in the basis of a regular proof
space corresponds to a transition of a QPA.  For example, the Hoare triple
\[\hoare{\texttt{d}(1) > 0 \land \old{\texttt{x}}=\texttt{x}}{\ic{\texttt{x=x-d}}{1}}{\old{\texttt{x}}>\texttt{x}}\]
corresponds to the transition
\[\delta([\old{\texttt{x}}>\texttt{x}], \ic{\texttt{x=x-d}}{i}) = [\texttt{d}(1)>0](i) \land [\old{\texttt{x}}=\texttt{x}]\ . \]
Details can be found in the appendix.

Finally, we conclude with step (III):
\begin{restatable}{proposition}{propqpaclosure}
\label{prop:qpa-closure}
  QPA languages are closed under Boolean operations (intersection, union, and
  complement).
\end{restatable}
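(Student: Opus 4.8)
The plan is to prove the three closure properties by the standard automata-theoretic constructions, adapted to the first-order transition formulas of QPA. Fix two QPA $A = \tuple{Q_A,\ar_A,\Sigma,\delta_A,\phi_\start^A,F_A}$ and $B = \tuple{Q_B,\ar_B,\Sigma,\delta_B,\phi_\start^B,F_B}$ over a common alphabet $\Sigma$, and assume without loss of generality that $Q_A \cap Q_B = \emptyset$ (rename predicate symbols otherwise). I will also use the following fact about the semantics: for a fixed input trace $\tau$, every run has a constant universe that must contain all thread identifiers occurring in $\tau$; as is standard in this data-word setting, I take the universe to be exactly this finite set, so that a single run of $A$, of $B$, or of any derived automaton ranges over one and the same finite universe.

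For \emph{union} and \emph{intersection} I would build a single QPA over the combined vocabulary $\tuple{Q_A \cup Q_B,\ar_A \cup \ar_B}$ with transition function $\delta_A \cup \delta_B$ and accepting predicates $F_A \cup F_B$, taking $\phi_\start = \phi_\start^A \lor \phi_\start^B$ for union and $\phi_\start = \phi_\start^A \land \phi_\start^B$ for intersection. Correctness rests on one observation: a transition formula $\delta_A(q,\sigma)$ with $q \in Q_A$ mentions only predicates of $Q_A$, and symmetrically for $B$. Hence every configuration of the combined automaton is the disjoint superposition of an $A$-reduct and a $B$-reduct over the same universe, the transition obligations for the two vocabularies never interact, and a combined run projects to an independent $A$-run and $B$-run (and conversely two runs over the common universe superpose into a combined run). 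Because the empty interpretation discharges every obligation vacuously and is accepting, the two projections can be chosen independently, and the start formula $\phi_\start^A \lor \phi_\start^B$ (resp.\ $\phi_\start^A \land \phi_\start^B$) decouples acceptance exactly into the disjunction (resp.\ conjunction) of ``$A$ accepts'' and ``$B$ accepts''.

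For \emph{complement} I would dualize $A$. Define the dual $\overline{\phi}$ of $\phi \in \formulae(Q,\ar)$ by leaving every predicate atom unchanged and swapping $\land \leftrightarrow \lor$, $\forall \leftrightarrow \exists$, and the equality/disequality atoms $(\cdot = \cdot) \leftrightarrow (\cdot \neq \cdot)$; then $\overline{\phi} \in \formulae(Q,\ar)$ and $\overline{\phi}$ coincides with the classical negation of $\phi$ on every finite structure. The complement automaton is $\overline{A} = \tuple{Q_A,\ar_A,\Sigma,\overline{\delta_A},\overline{\phi_\start^A},Q_A \setminus F_A}$ with $\overline{\delta_A}(q,\sigma) = \overline{\delta_A(q,\sigma)}$.

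Proving that $\lang(\overline{A})$ equals the complement of $\lang(A)$ is the step I expect to be the main obstacle, and I would establish it by casting acceptance of a fixed trace $\tau = \ic{\sigma_1}{\idx{i}_1}\dots\ic{\sigma_n}{\idx{i}_n}$ as a finite two-player game. Over the fixed finite universe there are only finitely many configurations, and a run has length $n$, so the game is finite. One player (Eve) resolves the disjunctions and existential quantifiers produced when an active obligation $q(\idx{i}_1,\dots,\idx{i}_{\ar(q)})$ is expanded into $\delta(q,\sigma)[\dots]$, while the other player (Adam) resolves the conjunctions and universal quantifiers and selects which active obligation to pursue; Eve wins a play iff it terminates in an accepting configuration. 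An accepting run of $A$ is precisely a winning strategy for Eve, so $\tau \in \lang(A)$ iff Eve wins the $A$-game. Dualizing the transition formulas interchanges $\land/\lor$ and $\forall/\exists$ and therefore swaps the roles of Eve and Adam, while complementing $F$ swaps the winning condition; consequently Eve wins the $\overline{A}$-game iff Adam wins the $A$-game. Finiteness yields determinacy, so Adam wins the $A$-game iff Eve does not, i.e.\ iff $\tau \notin \lang(A)$; combining these equivalences gives $\tau \in \lang(\overline{A})$ iff $\tau \notin \lang(A)$. The one genuinely new ingredient beyond the quantifier-free predicate automata of \cite{Farzan2015} is that the quantifiers become universe-indexed moves in this game, and care is needed to check that $\overline{(\cdot)}$ dualizes $\forall/\exists$ and the equality atoms simultaneously while keeping the formula inside the positive fragment $\formulae(Q,\ar)$; once that and the finiteness of the universe are in hand, determinacy is immediate.
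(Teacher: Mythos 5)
Your constructions are the same as the paper's: for union and intersection you take the disjoint union of the two vocabularies, combine the transition functions and accepting sets, and disjoin (resp.\ conjoin) the initial formulas, exactly as in the paper's proof; for complement you apply the same De Morgan dualization of the transition formulas and initial formula together with swapping accepting and rejecting predicates, the only difference being cosmetic (the paper renames each $q$ to a barred copy $\overline{q}$, while you reuse the same symbols). Where you genuinely add something is in the correctness argument: the paper's proof stops at the constructions, remarking only that they follow the classical ones for alternating finite automata, whereas you supply the missing argument by casting acceptance over a fixed finite universe as a finite two-player game and deriving complementation from determinacy together with the $\land/\lor$, $\forall/\exists$, $=/\neq$ duality. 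That extra care turns out to be load-bearing at exactly the point you flagged, the universe convention. Under the paper's literal semantics a configuration may carry \emph{any} finite universe containing the identifiers read, and with that reading neither the complement nor even the intersection construction is sound: take $Q=\{q\}$ with $q$ unary, $\Sigma=\{a\}$, $\delta(q,a)=q(i_1)$, $F=\{q\}$, and $\phi_\start=\forall i.\forall j.\; i=j$; this QPA accepts $\ic{a}{1}$ using the universe $\{1\}$, yet its dual, whose initial formula is $\exists i.\exists j.\; i\neq j$ and whose sole predicate is rejecting, also accepts $\ic{a}{1}$ by choosing the universe $\{1,2\}$ and the empty interpretation of $\overline{q}$. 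Likewise, conjoining $\forall i.\forall j.\; i=j$ with $\exists i.\exists j.\; i\neq j$ gives an unsatisfiable initial formula for the product automaton, although two automata with these initial formulas, identity transitions, and all predicates accepting each accept $\ic{a}{1}$ (over different universes). So your stipulation that the universe be exactly the set of thread identifiers occurring in the input trace is not a routine normalization: it is what makes the determinacy argument, and the proposition itself, go through, and the paper's proof is correct only if that convention is read into its definition of configurations.
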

The constructions follow the classical ones for alternating finite automata.  Again, details can be found in the appendix.


\subsection{QPA Emptiness}

We close this section with a discussion of the emptiness problem for
quantified predicate automata.  First, we observe that the emptiness problem
for QPA is undecidable in the general case, since emptiness is undecidable
even for quantifier-free predicate automata \cite{Farzan2015}.  In this
respect, our method parallels incremental termination provers for sequential
programs: the problem of checking whether a candidate termination
argument is sufficient is reduced to a safety problem that is undecidable.
Although the emptiness problem is undecidable, safety is a relatively
well-studied problem for which there are existing logics and algorithmic
techniques.  In particular, inductive invariants for QPA can serve as
certificates of their emptiness.  In the remainder of this section we
detail \emph{emptiness certificates}, which formalize this idea.


Intuitively, an \emph{emptiness certificate} for a QPA is a positive formula
that is entailed by the initial condition, inductive with respect to the
transition relation, and that has no models that are accepting
configurations.  A problem with this definition is that the transition
relation is infinitely-branching (we must verify that the emptiness
certificate is inductive with respect to the transition relation labeled with
any indexed command, of which there are infinitely many).  So first we define
a symbolic post-condition operator that gives a finite representation of these
infinitely many transitions.

Given a QPA $A = \tuple{Q,\ar,\Sigma,\delta,\phi_\start,F}$, we
define a \emph{symbolic post-condition} operator $\hat{\delta} :
\formulae(Q,\ar) \times \Sigma \rightarrow \formulae(Q,\ar)$ as follows:
\[\hat{\delta}(\phi, \sigma) = \exists i. \hat{\delta}(\phi, \ic{\sigma}{i}),\]
where $i$ is a fresh variable symbol not appearing in $\phi$ and
$\hat{\delta}(\phi, \ic{\sigma}{i})$ is the result of substituting each
proposition $q(j_1,...,j_{\ar(q)})$ that appears in $\phi$
with \[\delta(q,\sigma)[i_0 \mapsto i,i_1 \mapsto j_1, ..., i_{\ar(q)} \mapsto
  j_{\ar(q)}]\ .\]

We may now formally define emptiness certificates:
\begin{definition} \label{def:emp_cert}
  Let $A = \tuple{Q,\ar,\Sigma,\delta,\phi_\start,F}$ be a QPA.  An
  \emph{emptiness certificate} for $A$ is a positive first-order formula $\phi
  \in \formulae(Q,\ar)$ along with proofs of the following entailments:
  \begin{enumerate}
  \item \emph{Initialization}: $\phi_\start \vdash \phi$
  \item \emph{Consecution}: For all $\sigma \in \Sigma$, $\hat{\delta}(\phi,\sigma) \vdash \phi$
  \item \emph{Rejection}: $\phi \vdash \bigvee_{q \in Q\setminus F} \exists i_1,...,i_{\ar(q)}. q(i_1,...,i_{\ar(q)})$. \qedhere
  \end{enumerate}
\end{definition}



The following result establishes that that emptiness certificates are a sound
proof system for verifying emptiness of a QPA.
\begin{restatable}{theorem}{thmcert}
  \label{thm:cert}
    Let $A = \tuple{Q,\ar,\Sigma,\delta,\phi_\start,F}$ be a QPA.  If there is
    an emptiness certificate for $A$, then $\lang(A)$ is empty.
\end{restatable}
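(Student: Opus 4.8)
The plan is to treat the certificate $\phi$ as an inductive invariant that holds at every configuration of any putative accepting run, and then to contradict acceptance using the Rejection condition. Suppose toward a contradiction that some trace $\tau = \ic{\sigma_1}{\idx{i}_1} \dotsi \ic{\sigma_n}{\idx{i}_n}$ were accepted, witnessed by a run $\config_n, \dotsc, \config_0$ with $\config_n$ initial, $\config_0$ accepting, and $\ctrans{\config_r}{\sigma_r}{\idx{i}_r}{\config_{r-1}}$ for each $r \in \{1,\dotsc,n\}$. I would prove by downward induction on $r$ that $\config_r \models \phi$ for every $r$. The base case $\config_n \models \phi$ is immediate: $\config_n$ is initial, so $\config_n \models \phi_\start$, and the Initialization entailment $\phi_\start \vdash \phi$ gives $\config_n \models \phi$.

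The inductive step is the crux and rests on a semantic lemma linking the concrete transition relation to the symbolic post-condition: if $\config \models \phi$ and $\ctrans{\config}{\sigma}{\idx{k}}{\config'}$, then $\config' \models \hat{\delta}(\phi,\sigma)$. To prove it I would fix $\config', \sigma, \idx{k}$ and introduce the derived structure $D$ over the universe $U^{\config'}$ whose predicates are $q^D = \{ \tuple{\idx{i}_1,\dotsc,\idx{i}_{\ar(q)}} : \config' \models \delta(q,\sigma)[i_0 \mapsto \idx{k}, i_1 \mapsto \idx{i}_1,\dotsc] \}$. A routine induction on the structure of a positive formula $\psi$ shows that $\config' \models \hat{\delta}(\psi, \ic{\sigma}{\idx{k}})$ iff $D \models \psi$: the operator $\hat{\delta}$ rewrites each atom $q(j_1,\dotsc,j_{\ar(q)})$ precisely to the formula whose $\config'$-truth value equals $q^D(j_1,\dotsc,j_{\ar(q)})$, leaves the equality and disequality atoms untouched (these evaluate identically in $\config'$ and $D$, which share a universe), and commutes with the connectives and quantifiers (whose variables range over the common universe $U^{\config'} = U^D$). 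The transition condition says exactly that $\tuple{\idx{i}_1,\dotsc,\idx{i}_{\ar(q)}} \in q^\config$ forces $\config' \models \delta(q,\sigma)[\dotsc]$, i.e. $q^\config \subseteq q^D$ for every $q \in Q$. Since $\phi$ is positive, hence monotone in the interpretations of the predicate symbols, $\config \models \phi$ yields $D \models \phi$, and therefore $\config' \models \hat{\delta}(\phi, \ic{\sigma}{\idx{k}})$; taking $i = \idx{k}$ as a witness gives $\config' \models \exists i.\, \hat{\delta}(\phi, \ic{\sigma}{i}) = \hat{\delta}(\phi,\sigma)$.

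With the lemma in hand the inductive step closes immediately: from $\config_r \models \phi$ and $\ctrans{\config_r}{\sigma_r}{\idx{i}_r}{\config_{r-1}}$ we obtain $\config_{r-1} \models \hat{\delta}(\phi,\sigma_r)$, and the Consecution entailment $\hat{\delta}(\phi,\sigma_r) \vdash \phi$ gives $\config_{r-1} \models \phi$. Hence $\config_0 \models \phi$. But $\config_0$ is accepting, so $q^{\config_0} = \emptyset$ for every $q \notin F$, which makes $\bigvee_{q \in Q \setminus F} \exists i_1,\dotsc,i_{\ar(q)}.\, q(i_1,\dotsc,i_{\ar(q)})$ false in $\config_0$, contradicting the Rejection entailment. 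Therefore no trace can be accepted and $\lang(A)$ is empty.

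I expect the main obstacle to be stating and verifying the post-condition lemma precisely, rather than the surrounding induction. In particular, the delicate points are checking that the atom-rewriting defining $\hat{\delta}$ genuinely computes satisfaction in the derived structure $D$ through all the quantifier and disequality cases, and isolating exactly which notion of positivity (monotonicity in the $Q$-predicates, with equality and disequality treated as fixed) is what licenses the passage from $\config$ to the larger structure $D$.
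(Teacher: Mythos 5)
Your proposal is correct and follows essentially the same route as the paper's proof: both treat the certificate $\phi$ as an inductive invariant along any putative accepting run, isolate as the crux the lemma that $\config \models \phi$ and $\ctrans{\config}{\sigma}{\idx{k}}{\config'}$ imply $\config' \models \hat{\delta}(\phi,\sigma)$, and then contradict the Rejection condition at the accepting configuration. The paper merely remarks that this lemma ``can be proved by induction on $\phi$''; your derived-structure-plus-monotonicity argument is a clean packaging of exactly that induction (including the correct observation that the transition relation says $q^\config \subseteq q^D$ and that positivity supplies the needed monotonicity), so it fills in, rather than departs from, the paper's sketch.
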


\section{Beyond Termination} \label{sec:liveness}
In the last two sections presented a technique that uses well-founded proof
spaces to prove that parameterized programs terminate.  This section extends
the technique so that it may be used to prove that parameterized programs
satisfy general liveness properties.  The class of liveness properties we
consider are those that are definable in \emph{(thread) quantified linear
  temporal logic} ($\QLTL$), which extends linear temporal logic with thread
quantifiers to express properties of parameterized systems.


Given a finite alphabet $\Sigma$, a $\QLTL(\Sigma)$ formula is built
using the connectives of first-order and linear temporal logic, where
quantifiers may \textbf{not} appear underneath temporal modalities, and where
every proposition is either $i = j$ (for some thread variables $i,j$) or
$\ic{\sigma}{i}$ (for some $\sigma \in \Sigma$ and thread variable $i$).  A satisfaction relation $\models$ defines when a trace $\tau$ satisfies a
$\QLTL(\Sigma)$ formula, using a map $\mu$ to interpret free variables:
\begin{align*}
  \tau,\mu \models i = j &\iff \mu(i) = \mu(j)\\
  \tau,\mu \models \ic{\sigma}{i} &\iff \tau_1 = \ic{\sigma}{\mu(i)}\\
  \tau,\mu \models \phi\until\psi &\iff \exists k \in \mathbb{N}. (\forall i<k. \tau[i,\omega],\mu \models \phi)\\
  &\hspace*{2cm}\land (\tau[k,\omega],\mu \models \psi)\\
  \tau,\mu \models \ltlnext\phi &\iff \tau[2,\omega] \models \phi\\
  \tau,\mu \models \exists i. \phi &\iff \exists \idx{i}\in\natsleq{N}.\tau,\mu[i \mapsto \idx{i}] \models \phi\\
  \tau,\mu \models \phi \land \psi &\iff \tau,\mu \models \phi \land \tau,\mu \models \psi\\
  \tau,\mu \models \lnot\phi &\iff \tau,\mu \not\models \phi
\end{align*}
The rest of the connectives are defined by the usual equivalences ($\forall
i.\phi \equiv \lnot\exists i.\lnot\phi$, $\mathbf{F}\phi \equiv
\true\mathbf{U}\phi$, $\mathbf{G}\phi \equiv \lnot\mathbf{F}\lnot\phi$, $\phi\lor \psi \equiv \lnot (\lnot \phi \land \lnot \psi)$).
For a concrete example, the following formula expresses the liveness property
of the ticket protocol (Figure~\ref{fig:ticket}), ``if every thread executes
infinitely often, then no thread is starved'':
\[\big(\forall i.\globally\finally\bigvee_{\sigma \in \Sigma} \ic{\sigma}{i}\big) \Rightarrow \big(\forall i.\globally\finally\ic{\texttt{[m<=s]}}{i}\big)\]


The theorem enabling well-founded proof spaces to verify
$\QLTL(\Sigma)$ properties is the following:
\begin{restatable}{theorem}{thmqltl} \label{thm:qltl}
  Let $\Sigma$ be a finite alphabet, and let $\phi$ be a $\QLTL(\Sigma)$
  sentence.  There is a QPA $\mathcal{A}(\phi)$ that recognizes
  the language:
  \[ \$(\lang(\phi)) = \{ \tau\$\rho \in \bigcup_N\Sigma(N)^\omega : \tau\rho^\omega \models \phi \}  \let\qedsymbol\romanqed\qedhere \]
\end{restatable}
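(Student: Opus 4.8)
The plan is to build $\mathcal{A}(\phi)$ by combining a standard LTL-to-alternating-automaton translation with the lasso-handling mechanism already used for $\mathcal{A}(P)$ in Proposition~\ref{prop:pa-program}, lifting everything to the predicate setting so that the top-level first-order quantifiers of $\phi$ become quantifiers in $\phi_\start$. First I would preprocess $\phi$: because quantifiers never occur under a temporal modality, $\phi$ is (after conversion to negation normal form, using the temporal dual $R$ of $\until$ and $\neg\exists = \forall\neg$) a positive first-order formula whose atoms are either equalities $i=j$ or maximal \emph{temporal} subformulas $\psi(\vec{x})$ --- quantifier-free $\QLTL$ formulas with free thread variables $\vec{x}$. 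Let $\mathrm{cl}(\phi)$ be the (finite) Fischer--Ladner closure of these temporal subformulas. The relational vocabulary of $\mathcal{A}(\phi)$ has, for each $\psi \in \mathrm{cl}(\phi)$, a predicate $[\psi]$ whose arity is the number of free thread variables of $\psi$; the intended reading of $[\psi](\vec{a})$ in a configuration is ``$\psi$ holds at the current position of $\tau\rho^\omega$ under $\vec{x}\mapsto\vec{a}$''. Since $\mathrm{cl}(\phi)$ is finite the vocabulary is finite, as required.

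The transition function encodes the one-step unfolding laws of LTL, read --- as QPAs are --- from right to left, which is exactly the direction in which the truth of a formula at position $k$ is determined from its truth at position $k+1$ (e.g.\ $\alpha\until\beta \equiv \beta \lor (\alpha \land \ltlnext(\alpha\until\beta))$, and $\ltlnext\psi$ at $k$ is $\psi$ at $k+1$); the literals $[\ic{\sigma}{i}]$ are resolved directly against the letter being read. The periodic part is handled as in Proposition~\ref{prop:pa-program}: I would bundle into a single predicate symbol both the \emph{computed} truth value at the current position and a \emph{frozen guess} of the truth value at the loop point, initialise the two to agree at $\config_n$ (which, under right-to-left reading, sits at the loop point) via $\phi_\start$, leave the guess untouched while scanning $\rho$, and verify the cyclic fixpoint at the $\$$-transition by casing on the bundled symbol --- precisely the $\tuple{\ell_1,\ell_2}\mapsto \ell_1$ (if $\ell_1=\ell_2$), $\false$ (otherwise) trick. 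To force the \emph{least}-fixpoint semantics of $\until$ (so that a promised eventuality is actually discharged inside the loop rather than postponed forever, which is what distinguishes $\until$ from its greatest-fixpoint dual $R$), I would additionally track, while scanning $\rho$, whether the right-hand side of each claimed until-formula has been witnessed, and require at the $\$$-transition that every until-formula asserted at the loop point has been fulfilled.

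Finally I would layer on the top-level first-order structure. At $\config_n$ I guess (and freeze) the truth values of the maximal temporal subformulas \emph{at position $1$}, and set $\phi_\start$ to be $\hat\phi$ --- the formula $\phi$ with each temporal atom $\psi(\vec{x})$ replaced by its guess predicate, polarity being handled by also carrying complementary guesses so that $\hat\phi$ remains positive --- conjoined with the loop-cycle condition above. The guessed position-$1$ values are carried frozen to $\config_0$ and checked, when the first letter $\sigma_1$ is read, against the value produced by the backward pass (again by casing on a bundled symbol, a mismatch being routed to a non-accepting predicate that the acceptance condition forbids). Because the thread universe is constant along a run, the QPA quantifiers in $\phi_\start$ range over exactly the threads of the trace, matching the $\QLTL$ quantifier semantics, while equality atoms map to the equality atoms of $\formulae(Q,\ar)$. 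Correctness then follows by a subformula induction showing that $[\psi](\vec{a})$ holds in the appropriate configuration iff $\tau\rho^\omega$ satisfies $\psi$ there under $\vec{a}$, so that an accepting run exists iff $\tau\rho^\omega \models \phi$. Where polarity makes a direct positive construction awkward, I would instead build the automaton for the complementary condition and appeal to closure under complement (Proposition~\ref{prop:qpa-closure}).

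I expect the main obstacle to be the periodic part: forcing the loop-point valuation to be a genuine fixpoint while simultaneously enforcing the least-fixpoint (eventuality-fulfilment) reading of $\until$ against the greatest-fixpoint reading of its dual, all through a single right-to-left scan of $\rho$ whose only single-configuration checkpoints are $\phi_\start$ (at the loop point) and the acceptance condition (at position $1$). Adapting the location-pair cycle check of Proposition~\ref{prop:pa-program} to truth-valuations of temporal subformulas --- which, unlike locations, carry the least/greatest-fixpoint distinction and have differing free-variable arities --- is the delicate step; the remaining bookkeeping is routine, if tedious.
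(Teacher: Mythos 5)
Your proposal is sound in outline but follows a genuinely different route from the paper's proof. The paper does not build any temporal machinery inside the QPA at all: it first rewrites $\phi$ as a disjunction of prenex formulas $\mathcal{Q}_1 i_1 \dots \mathcal{Q}_k i_k.\, \big(\bigwedge_{j\neq j'} i_j \neq i_{j'}\big) \land \phi_m(i_1,\dots,i_k)$ with quantifier-free matrix (possible precisely because quantifiers may not occur under temporal modalities), and then instantiates the matrix with the concrete thread identifiers $1,\dots,k$, with $k{+}1$ standing for ``any other thread.'' This symmetry step turns the temporal core into an ordinary LTL formula over the \emph{finite} alphabet $\iSigma{k+1}$, so classical results apply as black boxes: the LTL-to-B\"{u}chi translation \cite{Vardi1994} followed by the construction of Calbrix et al.~\cite{Calbrix1993}, which yields a DFA recognizing exactly the lasso language of the resulting $\omega$-regular language. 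The QPA is then just this DFA run backward: predicate symbols are the DFA states with arity $k$, transitions case-split on whether the executing thread equals one of the $k$ tracked threads, and the quantifier prefix (with its distinctness guard) becomes the initial formula, while acceptance is reaching the DFA's start state. In other words, every difficulty you single out as delicate --- the loop-point fixpoint, the least-fixpoint (fulfilment) reading of $\until$ versus its greatest-fixpoint dual, positivity of transition formulas --- is delegated to \cite{Calbrix1993} over a finite alphabet, and the only genuinely new work is the observation that $k{+}1$ thread roles suffice. Your plan instead inlines and re-proves that machinery symbolically inside the QPA (closure predicates, frozen loop guesses, fulfilment flags, complementary-polarity predicates); I believe this can be carried out --- in particular your fulfilment check at the loop point alone is sufficient, since an unfulfilled until-assertion propagates around the loop back to the loop point --- but it is considerably heavier than you suggest, and what you call ``routine bookkeeping'' (forcing total dual-rail valuations through positive, right-to-left transition formulas so that the guessed loop valuation is genuinely verified rather than merely asserted) is exactly where the care is needed. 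What your route buys in exchange is self-containedness: no normal form, no instantiation of thread variables, no appeal to \cite{Calbrix1993}, and thread variables stay symbolic throughout, whereas the paper's route buys a short proof whose correctness rests on two well-studied finite-alphabet constructions.
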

\begin{proof}
  See appendix.
\end{proof}

This theorem allows us to employ a classical idea for temporal verification
\cite{Vardi1986}: to show that every execution of a program satisfies a
$\QLTL$ property $\phi$, we show that every program trace that \emph{violates}
$\phi$ is infeasible.  Thus, we have the following proof rule: given a
$\QLTL$ sentence $\phi$ and a parameterized program $P$, if there exists
regular well-founded proof space $\tuple{\mathscr{H},\mathscr{W}}$ with basis
$\tuple{H,W}$ such that the language $\lang(\mathcal{A}(P) \land
\mathcal{A}(\lnot\phi) \land \lnot \mathcal{A}(H,W))$ is empty, then $P$
satisfies $\phi$.

\begin{example}
To illustrate the idea behind Theorem~\ref{thm:qltl}, we give a manual
construction of a QPA for the (negated) liveness property of the ticket
protocol.  The negated liveness property can be written as a conjunction of a
fairness constraint and a negated liveness constraint:
\[\big(\forall i.\globally\finally\bigvee_{\sigma \in \Sigma} \ic{\sigma}{i}\big) \land \big(\exists i.\finally\globally\lnot\ic{\texttt{[m<=s]}}{i}\big)\]

Given a lasso $\tau\$\rho$, the ultimately periodic word $\tau\rho^\omega$
satisfies the above property iff each thread executes some
command along $\rho$ (left conjunct) and there is some thread that does
\emph{not} execute \texttt{[m<=s]} along $\rho$ (right conjunct).  We construct a QPA with two
monadic predicates $\textit{exec}$ and $\overline{\textit{enter}}$ and one
nullary predicate $\overline{\$}$ such that
\begin{itemize}
\item $\tau\$\rho$ is accepted from a configuration $\config$ with $\config
  \models \textit{exec}(\idx{i})$ iff $\rho$ contains a command of thread $\idx{i}$,
\item $\tau\$\rho$ is accepted from a configuration $\config$ with $\config
  \models \overline{\textit{enter}}(\idx{i})$ iff $\rho$ does not contain
  $\ic{\texttt{[m<=s]}}{\idx{i}}$, and
\item $\tau$ is accepted from a configuration $\config$ with $\config \models
  \overline{\$}$ iff $\tau$ does not contain $\ic{\$}{\idx{i}}$ for any thread
  $\idx{i}$.
\end{itemize}
The initial formula of the QPA is $(\forall i. \textit{exec}(i)) \land
(\exists i. \overline{\textit{enter}}(i))$ and the only accepting predicate
symbol is $\overline{\$}$.  The transition relation of the QPA is as follows:
\begin{align*}
  \delta(\overline{\textit{enter}}(i), \ic{\texttt{[m<=s]}}{j}) &= i \neq j \land \overline{\textit{enter}}(i)\\
  \delta(\overline{\textit{enter}}(i), \ic{\$}{j}) &= \overline{\$}\\
  \delta(\overline{\textit{enter}}(i), \ic{\texttt{m=t++}}{j}) &= \overline{\textit{enter}}(i)\\
  \delta(\overline{\textit{enter}}(i), \ic{\texttt{s++}}{j}) &= \overline{\textit{enter}}(i)\\
  \delta(\overline{\textit{enter}}(i), \ic{\texttt{[m>s]}}{j}) &= \overline{\textit{enter}}(i)\\
  \delta(\textit{exec}(i), \ic{\$}{j}) &= \false\\
  \delta(\overline{\$}, \ic{\$}{j}) &= \false
\end{align*}
and for all $\sigma \neq \$$,
\begin{align*}
  \delta(\textit{exec}(i), \ic{\sigma}{j}) &= i = j \lor \textit{exec}(i)\\
  \delta(\overline{\$}, \ic{\sigma}{j}) &= \overline{\$}\ . \qedhere
\end{align*}
\end{example}



\section{Discussion} \label{sec:discussion}

Although well-founded proof spaces are designed to prove termination of
parameterized concurrent programs, a natural question is how they relate to
existing methods for proving termination of sequential programs.  This section
investigates this question.  We will compare with the method of
\textbf{disjunctively well-founded transition invariants}, as exemplified by
Terminator \cite{Cook2005}, and the \textbf{language-theoretic} approach, as
used by Automizer \cite{conf/cav/HeizmannHP14}.


Terminator, Automizer, and our approach using well-founded proof spaces employ
the same high-level tactic for proving termination. The termination argument
is constructed incrementally in a sample-synthesize-check loop: first, sample
a lasso of the program, then synthesize a candidate termination argument
(using a ranking function for that lasso), then check if the candidate
argument applies to the whole program.  However, they are based on
fundamentally different proof principles.

\parpic[r]{
  \begin{minipage}{3cm}
    \small
    \texttt{i = pos()}\\
    \textbf{if}\texttt{(0 $\leq$ i $\leq$ 1):}\\
    \hspace*{0.25cm}\texttt{i = 2*i - 1}\\
    \hspace*{0.25cm}\textit{{\color{gray}// i is either 1 or -1}}\\
    \hspace*{0.25cm}\textbf{while}\texttt{(x$>$0 $\land$ z$>$0):}\\
    \hspace*{0.5cm}\texttt{x = x + i}\\
    \hspace*{0.5cm}\texttt{z = z - i}
\end{minipage}} Terminator is based on the principle of disjunctively well-founded
transition invariants.  Terminator proves termination by showing that the
\emph{transitive closure} of a program's transition relation is contained
inside the union of a finite number of well-founded relations.  As a concrete
example, consider the program to the right.  Assuming that we restrict
ourselves to linear ranking functions, well-founded proof spaces (and
Automizer) cannot prove that this program terminates, because there is no
linear term that decreases at every loop iteration.  Terminator can prove this
program terminates by showing that no matter how many iterations of the loop are
executed, \texttt{x} decreases \emph{or} \texttt{z} decreases.

\parpic[r]{
  \begin{minipage}{2.1cm}
    \small
\texttt{flag = true}\\
\textbf{while}\texttt{(z $>$ 0)}:\\
\hspace*{0.25cm}\textbf{if}\texttt{(flag):}\\
\hspace*{0.5cm}\texttt{x = z}\\
\hspace*{0.5cm}\texttt{z = pos()}\\
\hspace*{0.25cm}\textbf{else}\texttt{:}\\
\hspace*{0.5cm}\texttt{z = x - 1}\\
\hspace*{0.5cm}\texttt{x = pos()}\\
\hspace*{0.25cm}\texttt{flag = $\lnot$flag}
  \end{minipage}} Like well-founded proof spaces, Automizer is based on a language-theoretic
view of termination.  Automizer proves termination by exhibiting a family of
B\"{u}chi automata, each of which recognizes a language of traces that
terminate ``for the same reason'' (some given ranking function decreases
infinitely often), and such that every trace of the program is recognized by
one of the automata.  Assuming that we restrict ourselves to linear ranking
functions, Terminator cannot prove the program to the right terminates because
there is no linear disjunctively well-founded relation that includes the
\emph{odd} loop iterations.  Automizer (and well-founded proof spaces) can
prove the program terminates using the linear ranking function \texttt{z},
which decreases infinitely often along any infinite trace (at every even
loop iteration).

In the case of \emph{non}-parameterized concurrent programs, well-founded
proof spaces are equivalent in power to Automizer.  Suppose that $P$ is a
program that is intended to be executed by a fixed number of threads
$N$ (i.e., we are interested only proving that every trace in $P(N)$
terminates).  In this case, the premise of the proof rule ($\lang(P) \subseteq
\omega(\mathscr{H},\rankformulas)$) can be checked effectively using
algorithms for B\"{u}chi automata, due to the fact that both $\lang(P(N))$ and
$\omega(\mathscr{H},\rankformulas) \cap \iSigma{N}^\omega$ are
$\omega$-regular.

To cope with parameterized programs in which the number of threads is
arbitrary, Section~\ref{sec:check} describes a \emph{lasso} variation of the
proof rule (wherein we check $\$(\lang(P)) \subseteq
\$(\mathscr{H},\rankformulas)$ as a means to prove that $\lang(P) \subseteq
\omega(\mathscr{H},\rankformulas)$).  The lasso proof rule is strictly weaker
than Automizer's, and the above program cannot be verified for the same reason
that Terminator fails: there is no ranking function that decreases after odd
iterations of the loop.  That is, we cannot construct a well-founded proof
space such that $\$(\mathscr{H},\rankformulas)$ contains $\tau\$\rho^i$ for
odd $i$ (where $\tau$ represents the stem \texttt{flag = true} and $\rho$
represents one iteration of the \textbf{while} loop).  There is an interesting
connection between the lasso variant of well-founded proof spaces and
disjunctively well-founded transition invariants.  Terminator checks that the
transitive closure of the transition relation is contained inside a given
disjunctively well-founded relation by proving safety of a transformed
program.  The transformed program executes as the original, but (at some
point) non-deterministically saves the program state and jumps to another
(disconnected) copy of the program, in which at every loop iteration the
program asserts that the ``saved'' and ``current'' state are related by the
disjunctively well-founded relation.  Intuitively, this jump corresponds to
exactly the $\$$ marker in lasso languages: the traces that perform the jump
in the transformed program can be put in exact correspondence with the traces
of the lasso language $\$(\lang(P))$.


Thus, well-founded proof spaces relate to both the Terminator and Automizer
proof rules.  Section~\ref{sec:proofspace} is aligned with the
language-theoretic view of program termination used by Automizer.
Section~\ref{sec:check} mirrors the program transformation employed by
Terminator to cope with transitive closure.


\section{Conclusion} \label{sec:conclusion}

This paper introduces well-founded proof spaces, a formal foundation for
automated verification of liveness properties for parameterized programs.
Well-founded proof spaces extend the incremental termination proof strategy pioneered in
\cite{Cook2005,pldi/CookPR06} to the case of concurrent programs with unboundedly many
threads.  This paper investigates a
logical foundation of an automated proof strategy.  We leave for future work
the problem of engineering heuristic techniques to make the framework work in
practice.

\bibliographystyle{plain}
\bibliography{paper}

\appendix

\section{Proofs}

\thmsoundness*
\begin{proof}
  For a contradiction, suppose that there is an execution
  \[ s_0 \xrightarrow{\ic{\sigma_1}{\idx{i}_1}} s_1 \xrightarrow{\ic{\sigma_2}{\idx{i}_2}} \dotsi \]
  such that $\ic{\sigma_1}{\idx{i}_1}\ic{\sigma_2}{\idx{i}_2}\dotsi \in
  \omega(\mathscr{H},\rankformulas)$.  Then there is a sequence of naturals
  $\{\alpha_k\}_{k \in \mathbb{N}}$ and a ranking formula $w \in
  \rankformulas$ which satisfy the conditions of Definition~\ref{def:omega-h}.
  It is straightforward to show that $\{s_{\alpha_k}\}_{k \in \mathbb{N}}$ is an
  infinite descending sequence of program states in the order $w$,
  contradicting the fact that $w$ defines a well-founded order.
\end{proof}

\proppaprogram*
\begin{proof}
  Let $P = \tuple{\Loc,\Sigma,\ell_\init,\src,\tgt}$ be a parameterized
  program.

  Intuitively, we can recognize that $\tau\$\rho$ is a lasso of the program
  $P$ (reading $\tau\$\rho$ from right to left) as follows:
  \begin{itemize}
  \item While reading the loop $\rho$, we keep track of the control location
    of every thread, but also ``remember'' the control location in which each
    thread started.  This is accomplished with unary predicates of the form
    $\tuple{\ell_1,\ell_2}(i)$ (with $\ell_1,\ell_2 \in \Loc$), so that
    $\tuple{\ell_1,\ell_2}(i)$ holds when thread $i$ is at location $\ell_1$
    and started in location $\ell_2$
  \item When reading the separator symbol $\$$, we verify that each thread is
    in a loop by transitioning from $\tuple{\ell_1,\ell_2}(i)$ to $\ell_1(i)$
    if $\ell_1 = \ell_2$ (i.e., thread $i$ is currently at the same location
    it started in), and otherwise rejecting by transitioning to $\false$.
  \item When reading the stem $\tau$, we track the control location of each
    thread using unary predicates of the form $\ell(i)$.
  \item To verify that after reading $\tau\$\rho$ every thread is in the
    initial control location (equivalently, no thread is at a location other
    than the initial one) by making every predicate symbol except $\ell_\init$
    rejecting.
  \end{itemize}

  Formally, we construct the QPA $\mathcal{A}(P) =
  \tuple{Q,\ar,\Sigma,\delta,\phi_\start,F}$ as follows:
  \begin{itemize}
  \item $Q = \Loc \cup (\Loc \times \Loc) \cup \{ \Delta, \overline{\$}, \loc \}$.  Intuitively,
    \begin{itemize}
    \item $\Delta(i)$ stands for the disjunction $\bigvee_{\ell \in
      \Loc} \tuple{\ell,\ell}(i)$
    \item $\loc(i)$ stands for the disjunction $\bigvee_{\ell\in \Loc} \ell(i)$
    \item $\overline{\$}$ is used to enforce the condition that the loop of a
      lasso may not be empty (i.e., $\$$ does not appear at the end of an
      accepted word)
    \end{itemize}
      
  \item The arity of each predicate symbol is 1, except $\overline{\$}$ which has
    arity 0.
  \item $\phi_\start = \overline{\$} \land \forall i. \Delta(i)$ (i.e.,
    initially every thread is currently where it started)
  \item $F = \{ \ell_\init, \loc \}$
  \end{itemize}
  The transition function $\delta$ is defined as follows: {\small

    \vspace*{1pt}
    \noindent$\delta(\tuple{\ell_1,\ell_2}(i), \ic{\sigma}{j}) =$\\
    \null\hfill$\begin{cases}
      \ite{i=j}{\tuple{\src(\sigma),\ell_2}(i)}{\tuple{\ell_1,\ell_2}(i)} & \text{if } \tgt(\sigma) = \ell_1\\
      i \neq j \land \tuple{\ell_1,\ell_2}(i) &\text{otherwise}
    \end{cases}$\\

    \noindent$\delta(\ell(i), \ic{\sigma}{j}) =$\\
    \null\hfill$\begin{cases}
      \ite{i=j}{\src(\sigma)(i)}{\ell(i)} & \text{if } \tgt(\sigma) = \ell\\
      i \neq j \land \ell(i) &\text{otherwise}
    \end{cases}$
  \begin{align*}
    \delta(\Delta(i), \ic{\sigma}{j}) &= \ite{i=j}{\tuple{\src(\sigma),\tgt(\sigma)}(i)}{\Delta(i)}\\
    \delta(\loc(i), \ic{\sigma}{j}) &= \ite{i=j}{\src(\sigma)(i)}{\loc(i)}\\
    \delta(\overline{\$}, \ic{\sigma}{j}) &= \true\\
    \delta(\tuple{\ell_1,\ell_2}(i), \$) &= \begin{cases}
      \ell_1(i) & \text{if } \ell_1 = \ell_2\\
      \false &\text{otherwise}
    \end{cases}\\
    \delta(\ell(i), \$) &= \false\\
    \delta(\Delta(i), \$) &= \loc(i)\\
    \delta(\loc(i), \$) &= \false\\
    \delta(\overline{\$}, \$) &= \false
  \end{align*}
  }
\end{proof}

\proppaproof*
\begin{proof}
  Our construction follows a similar structure to the construction of a
  (quantifier-free) predicate automaton from a proof space \cite{Farzan2015}.
  Let $\tuple{\mathscr{H},\rankformulas}$ be a regular well-founded proof space, and let $\tuple{H,W}$ be a basis for $\tuple{\mathscr{H},\rankformulas}$.

  The intuition behind the construction of a QPA $\mathcal{A}(H,W)$ which
  recognizes $\$(\mathscr{H},\rankformulas)$ is that the predicate symbols in $A$
  correspond to program assertions in $H$, and the transition function
  corresponds to the Hoare triples in $H$.  More explicitly, we define a QPA
  $\mathcal{A}(H,\rankformulas) = \tuple{Q,\ar,\Sigma,\delta,\phi_\start,F}$ as follows.

  The set of predicate symbols $Q$ is the set of \emph{canonical names} for
  the assertions which appear in $H$.  A canonical name is a representation of
  an equivalence class of program assertions, where two assertions $\phi$ and
  $\psi$ are equivalent if there is a permutation of thread identifiers $\pi :
  \mathbb{N} \rightarrow \mathbb{N}$ so that $\phi[\pi] = \psi$.  For example,
  the assertions $\texttt{m}(4)<\texttt{m}(2)$ and
  $\texttt{m}(2)<\texttt{m}(9)$ are both represented by the same canonical
  assertion, which we write as $[\texttt{m}(2)<\texttt{m}(9)]$.  The arity of
  a predicate symbol is the number of distinct thread indices which appear in it
  (e.g., $\ar([\texttt{m}(2)<\texttt{m}(9)]) = 2$).

  Each Hoare triple in $H$ corresponds to a transition rule of
  $\mathcal{A}(H,W)$.  For example, the Hoare triple
  \[\hoare{\texttt{m}(1)<\texttt{t}}{\ic{\texttt{m=t++}}{2}}{\texttt{m}(1)<\texttt{m}(2)}\]
  corresponds to the transition
  \[\delta([\texttt{m}(1)<\texttt{m}(2)](i,j), \ic{\texttt{m=t++}}{k}) = k = j \land [\texttt{m}(1)<\texttt{t}](i) \]
  If there are multiple Hoare triples with the same command and canonical
  post-condition, then the transition rules are combined via disjunction.  For
  example, if the following Hoare triple also belongs to the basis:
  \[\hoare{\texttt{m}(2)<\texttt{m}(1)}{\ic{\texttt{m=t++}}{3}}{\texttt{m}(2)<\texttt{m}(1)}\]
  then the transition rule is:\\
  $\delta([\texttt{m}(1)<\texttt{m}(2)](i,j), \ic{\texttt{m=t++}}{k}) = k = j \land [\texttt{m}(1)<\texttt{t}](i)$\\
  \null\hfill$\lor k \neq i \land k \neq j \land [\texttt{m}(1)<\texttt{m}(2)](i,j)$

  For any global variable $g$, by reading $\$$ we may transition from
  $[\old{g} = g]$ to $\true$ (and similarly for local variables $l$):
  \begin{align*}
  \delta([\old{g} = g],\ic{\$}{i_0}) &= \true\\
  \delta([\old{\iv{l}{1}} = \iv{l}{1}](i_1),\ic{\$}{i_0}) &= \true\\
  \end{align*}
  For all other predicate symbols $q$, reading $\$$ has no effect:
  \[ \delta(q(i_1,...,i_{\ar(q)}),\ic{\$}{i_0}) = q(i_1,...,i_{\ar(q)})\]

  The initial formula of $\mathcal{A}(H,W)$ expresses the desired
  post-condition that the \emph{some} ranking formula decreases.  Formally, 
  $\phi_\init$ is defined as
  follows:
  \[ \phi_\init = \bigvee_{w \in W} \exists \vec{i}. w(\vec{i})\]
  Lastly, there are no accepting predicate symbols ($F=\emptyset$), which expresses
  the desired pre-condition $\true$.
\end{proof}

\propqpaclosure*
\begin{proof}
  Let $A$ and $A'$ be PAs.  We form their intersection $A \cap A'$ by taking
  the vocabulary to be the disjoint union of the vocabularies of $A$ and $A'$,
  and define the transition relation and accepting predicates accordingly.
  The initial formula is obtained by conjoining the initial formulas of $A$
  and $A'$.  The union $A \cup A'$ is formed similarly, except the initial
  formula is the disjunction of the initial formulas of $A$ and $A'$.

  Given a PA $A = \tuple{Q,\ar,\Sigma,\delta,\phi_\init,F}$, we form its
  complement $\overline{A} =
  \tuple{\overline{Q},\overline{\ar},\Sigma,N,\overline{\delta},\overline{\phi_\init},\overline{F}}$
  as follows.  We define the vocabulary $(\overline{Q},\overline{\ar})$ to be a
  ``negated copy'' of $(Q,\ar)$: $\overline{Q} = \{ \overline{q} : q \in
  Q \}$ and $\overline{\ar}(\overline{q}) = \ar(q)$.  The set of accepting
  predicate symbols is the (negated) set of rejecting predicate symbols
  from $A$: $\overline{F} = \{ \overline{q} \in \overline{Q} : q \in Q
  \setminus F \}$.  For any formula $\phi$ in $\formulae(Q,\ar)$ in the vocabulary of $A$, we use
  $\overline{\phi}$ to denote the ``De Morganization'' of $\phi$ in the vocabulary of $\overline{A}$, defined
  recursively by:
  \begin{center}
  $\overline{q(i_{j_1},...,i_{j_{\ar(q)}})} = \overline{q}(i_{j_1},...,i_{j_{\ar(q)}})$\\
    \begin{minipage}[t]{3cm}
      \vspace*{-5pt}
    \begin{align*}
      \overline{i_j = i_k} &= (i \neq j)\\
      \overline{\phi \land \psi} &= \overline{\phi} \lor \overline{\psi}\\
      \overline{\exists i. \phi} &= \forall i. \overline{\phi}
    \end{align*}
  \end{minipage}
  \begin{minipage}[t]{3cm}
      \vspace*{-5pt}
    \begin{align*}
      \overline{i_j \neq i_k} &= (i = j)\\
      \overline{\phi \lor \psi} &= \overline{\phi} \land \overline{\psi}\\
      \overline{\forall i. \phi} &= \exists i. \overline{\phi}
    \end{align*}
  \end{minipage}
  \end{center}

  We define the transition function and initial formula of $\overline{A}$ by
  De Morganization: $\overline{\delta}(\overline{q},\sigma)$ is defined to be
  $\overline{\delta(q,\sigma)}$ and the initial formula is defined to be
  $\overline{\phi_\init}$.
\end{proof}

\thmcert*
\begin{proof}
  Let $A = \tuple{Q,\ar,\Sigma,\delta,\phi_\start,F}$ be a QPA and let $\phi$
  be an emptiness certificate for $A$.  Intuitively, the \emph{Initialization}
  and \emph{Consecution} conditions for $\phi$ express that $\phi$ is an
  inductive invariant for the transition system on $A$-configurations, while
  the \emph{Rejection} condition expresses that no model of $\phi$ is
  accepting.
  
  The non-standard part of argument is proving the soundness of the
  consecution condition, since the consecution condition expresses that $\phi$
  is inductive under $\hat{\delta}$ rather than the transition relation on
  $A$-configurations.  In other words, we must establish that if
  $\config,\config'$ are $A$-configurations, $\sigma \in \Sigma$ and $\idx{k}
  \in \mathbb{N}$ so that that $\config \models \phi$ and
  $\ctrans{\config}{\sigma}{\idx{k}}{\config'}$, then $\config' \models
  \hat{\delta}(\phi, \sigma)$.  This property can be proved by induction on
  $\phi$.
\end{proof}

\thmqltl*
\begin{proof}
  Any $\QLTL(\Sigma)$ formula is equivalent to one written as a disjunction of
  $\QLTL(\Sigma)$ formulas of the form \[\mathcal{Q}_1 i_1. \mathcal{Q}_2
  i_2,... \mathcal{Q}_k i_{k}. \big(\bigwedge_{j\neq j'} i_j \neq i_{j'}\big)
  \land \phi_m(i_1,...,i_k)\ ,\] where each $\mathcal{Q}_j$ is either
  $\exists$ or $\forall$, and $\phi_m(i_1,...,i_k)$ is quantifier-free.  It is
  sufficient to construct the QPA for a single disjunct of this form, since
  QPA languages are closed under union (Proposition~\ref{prop:qpa-closure}).

  Since the formula $\phi_m[i_1 \mapsto
    1, \ldots, i_k \mapsto k]$ is equivalent to an LTL formula, the
  the set of all \itrace{}s $\tau \in \Sigma(k+1)^\omega$ such that
  \[\tau \models \phi_m[i_1 \mapsto 1, \ldots, i_k \mapsto k]\] can be
  recognized by a B\"{u}chi automaton \cite{Vardi1994}.  Let
  $A^\omega(\phi_m)$ denote this B\"{u}chi automaton.  From
  $A^\omega(\phi_m)$, we may derive a deterministic finite automaton
  $A^\$(\phi_m) = \tuple{Q_m,\iSigma{k+1},\delta_m,s_m,F_m}$ which recognizes
  $\$(\lang(A^\omega(\phi)))$, following the construction from
  \cite{Calbrix1993}.  From $A^\$(\phi_m)$, we may construct a
  QPA \[\mathcal{A}(\phi) = \tuple{Q,\ar,\Sigma,\delta,\phi_\start,F}\] which
  recognizes the language
  \[\$(\lang(\mathcal{Q}_1 i_1. \mathcal{Q}_2 i_2,... \mathcal{Q}_k
  i_k. \big(\bigwedge_{j\neq j'} i_j \neq i_{j'}\big) \land
  \phi_m(i_1,...,i_k)))\] as follows:
  \begin{itemize}
  \item $Q = Q_m$
  \item For each predicate symbol $q \in Q$, define $\ar(q) = k$
  \item $\phi_\init = \mathcal{Q}_1 i_1. \dotsi \mathcal{Q}_k i_k. \big(\bigwedge_{j\neq j'} i_j \neq i_{j'}\big) \land \big(\bigvee_{q \in F_m} q(i_1,...,i_k)\big)$
  \item For every $q \in Q_m$ and $\sigma \in \Sigma \cup \{\$\}$, $\delta$ is defined by:
    {\small
      $\delta(q(\vec{i}), \ic{\sigma}{i_0}) =$\hfill$i_0 = i_1 \land \big(\bigvee \{ q'(\vec{i}) : \delta_m(q',\ic{\sigma}{1}) = q \}\big)$\\
      \null\hfill$\vdots$\hfill\null\\
      \null\hfill$\lor i_0 = i_k \land \big(\bigvee \{q'(\vec{i}) : \delta_m(q',\ic{\sigma}{\idx{k}}) = q \}\big)$\\
      \null\hfill$\lor \bigwedge_{j = 1}^k i_0 \neq i_j \land \big(\bigvee \{q'(\vec{i}) : \delta_m(q',\ic{\sigma}{\idx{k}+1}) = q \}\big)$
    }
  \item $F = \{q_0\}$ \qedhere
  \end{itemize}
\end{proof}

\end{document}
